\newtheorem{thm}{THEOREM}[section]
\newtheorem*{thm*}{THEOREM}
\newtheorem{lem}[thm]{Lemma}
\newtheorem{prop}[thm]{Proposition}
\theoremstyle{definition}
\newtheorem{rem}[thm]{Remark}
\newcommand{\drm}{\mathrm{d}}
\newcommand{\pddt}{\frac{\partial\phantom{t}}{\partial t}}
\DeclareMathOperator{\tr}{tr}
\newcommand{\refeq}[1]{(\ref{#1})}
\newcommand{\vect}[1] {\boldsymbol{{ #1}} }
\newcommand{\tenseur}[1]{{\textbf{\textsf{#1}}}}
\newcommand{\bMj}{{\tenseur{j}}}
\newcommand{\bpi}{\boldsymbol{\pi}}
\newcommand{\jV}{{\vect{j}}}		
\newcommand{\kV}{{\vect{k}}}		
\newcommand{\pV}{{\vect{p}}}            
\newcommand{\qV}{{\vect{q}}}            
\newcommand{\sV}{{\vect{s}}}            
\newcommand{\vV}{{\vect{v}}}            
\newcommand{\AV}{\pmb{\mathcal{A}}}
\newcommand{\FV}{\pmb{\mathcal{F}}}
\numberwithin{equation}{section}
\newcounter{subequation}
	\newenvironment{subequation}%
	{\addtocounter{equation}{-1}%
	\stepcounter{subequation}%
	\begin{equation}}%
	{\end{equation}%
}
\newcommand{\fa}{\mathfrak{a}}
\newcommand{\fe}{\mathfrak{e}}
\newcommand{\ff}{\mathfrak{f}}
\newcommand{\fm}{\mathfrak{m}}
\newcommand{\bB}{\mathbf{B}}
\newcommand{\bb}{\mathbf{b}}
\newcommand{\ba}{\mathbf{a}}
\newcommand{\bfa}{\boldsymbol{\fa}}
\newcommand{\bE}{\mathbf{E}}
\newcommand{\be}{\mathbf{e}}
\newcommand{\bF}{\mathbf{F}}
\newcommand{\beff}{\mathbf{f}}
\newcommand{\bff}{\boldsymbol{\ff}}
\newcommand{\bj}{\mathbf{j}}
\newcommand{\bL}{\mathbf{L}}
\newcommand{\bn}{\mathbf{n}}
\newcommand{\bP}{\mathbf{P}}
\newcommand{\bq}{\mathbf{q}}
\newcommand{\bR}{\mathbf{R}}
\newcommand{\bt}{\mathbf{t}}
\newcommand{\bu}{\mathbf{u}}
\newcommand{\bX}{\mathbf{X}}
\newcommand{\bx}{\mathbf{x}}
\newcommand{\by}{\mathbf{y}}
\newcommand{\bz}{\mathbf{z}}
\newcommand{\btau}{\boldsymbol{\tau}}
\newcommand{\beq}{\begin{equation}}
\newcommand{\eeq}{\end{equation}}
\newcommand{\bseq}{\begin{subequation}}
\newcommand{\eseq}{\end{subequation}}
\newcommand{\Id}{\mathds{1}}
\newcommand{\fL}{\mathfrak{L}}
\newcommand{\p}{\partial}
\newcommand{\cA}{\mathcal{A}}
\newcommand{\cB}{\mathcal{B}}
\newcommand{\cE}{{\mathcal E}}
\newcommand{\cH}{{\mathcal H}}
\newcommand{\cL}{{\mathcal L}}
\newcommand{\cLph}{{\mathcal L}_{\mbox{\tiny{ph}}}}
\newcommand{\cM}{{\mathcal M}}
\newcommand{\cP}{{\mathcal P}}
\newcommand{\cS}{{\mathcal S}}
\newcommand{\cT}{\mathcal{T}}
\newcommand{\cV}{\mathcal{V}}
\newcommand{\siV}{\boldsymbol{\sigma}}
\newcommand{\xiV}{{\boldsymbol{\xi}}}
\newcommand{\bLa}{\boldsymbol{\Lambda}}
\newcommand{\psiPH}{{\psi_{\mbox{\rm\tiny ph}}}}
\newcommand{\psiPHb}{{\overline\psi_{\mbox{\rm\tiny ph}}}}
\newcommand{\psiEL}{{\psi_{\mbox{\rm\tiny el}}}}
\newcommand{\psiELb}{{\overline\psi_{\mbox{\rm\tiny el}}}}
\newcommand{\psiLP}{{{\boldsymbol{\psi}}_{\mbox{\rm\tiny LP}}}}
\newcommand{\PsiPh}{{}^{\mbox{\textrm{`}}}{\Psi_{\mbox{\rm\tiny ph}}^{\mbox{\textrm{'}}}}}
\newcommand{\phiPH}{{\phi_{\mbox{\rm\tiny ph}}}}
\newcommand{\mEL}{{m_{\mbox{\rm\tiny el}}}}
\newcommand{\mPHO}{{m_{\mbox{\rm\tiny ph}}}}
\newcommand{\mPH}{{m_{\mbox{\Lightning}}}}
\newcommand{\Cset}{{\mathbb C}}
\newcommand{\Nset}{{\mathbb N}}
\newcommand{\Rset}{{\mathbb R}}
\newcommand{\Zset}{{\mathbb Z}}
\newcommand{\pdt}{{\partial_t^{\phantom{0}}}}
\newcommand{\ze}{\zeta}
\newcommand{\la}{\lambda}
\newcommand{\de}{\delta}
\newcommand{\al}{\alpha}
\newcommand{\ga}{\gamma}
\newcommand{\bga}{\boldsymbol{\gamma}}
\newcommand{\ep}{\epsilon}
\newcommand{\si}{\sigma}
\newcommand{\Si}{\Sigma}
\newcommand{\nab}{\nabla}
\newcommand{\half}{\frac{1}{2}}
\newcommand{\diag}{\mbox{diag}}
\newcommand{\Span}{\mbox{Span}}
\newcommand{\bna}{\begin{eqnarray}}
\newcommand{\ena}{\end{eqnarray}}
\newcommand{\bea}{\begin{eqnarray}}
\newcommand{\eea}{\end{eqnarray}}
\newcommand{\ben}{\begin{enumerate}}
\newcommand{\een}{\end{enumerate}}
\newcommand{\bi}{\begin{itemize}}
\newcommand{\ei}{\end{itemize}}
\newcommand{\dal}{\raisebox{3pt}{\fbox{}}\,}
\newcommand{\sdf}{{}^{\mathrm{sd}}\bff}
\newcommand{\asdf}{{}^{\mathrm{asd}}\bff}
\newcommand{\TMmunu}{{T_{\mbox{\tiny\textrm{M}}}^{\mu\nu}}}
\begin{document}

\title{On the Quantum-Mechanics of a Single Photon}

\author{\normalsize\sc{Michael K.-H. Kiessling and A. Shadi Tahvildar-Zadeh}\\
	{$\phantom{nix}$}\\[-0.1cm] 
        \normalsize Department of Mathematics\\[-0.1cm]
	Rutgers, The State University of New Jersey\\[-0.1cm]
	110 Frelinghuysen Rd., Piscataway, NJ 08854}
\vspace{-0.3cm}
\date{Submitted to J. Math. Phys. on 12/31/2017\\
 most recent revision: 10/09/2018} 
\maketitle
%
%
%
\begin{abstract}
\noindent
 It is shown that a Dirac(-type) equation for a rank-two bi-spinor field $\psiPH$ on Minkowski (configuration) spacetime
furnishes a Lorentz-covariant \emph{quantum-mechanical} wave equation in position-space representation for a single free photon.
 This equation does not encounter any of the roadblocks that have obstructed previous attempts (by various authors) to formulate 
a {quantum-mechanical} photon wave equation.  
 In particular, it implies that the photon wave function $\psiPH$ yields conserved 
non-negative Born-rule-type quantum probabilities,  
and that its probability current density four-vector transforms properly under Lorentz transformations.
 Moreover, the eigenvalues of the pertinent photon Dirac Hamiltonian and the vector eigenvalues of the photon momentum operator 
yield the familiar Einstein relations $E=\hbar\omega$ and $\pV=\hbar\kV$, respectively. 
 Furthermore, these spin-1 wave modes are automatically transversal without the need of an additional constraint on the initial data.
 Some comments on  other proposals to set up a photon wave equation are supplied as well.
\end{abstract}

\smallskip

\vfill
\hrule
\smallskip
\copyright{2018. The authors. Reproduction for non-commercial purposes only is permitted.}

\medskip

\newpage

\section{Introduction} 

 According to Weinberg (\cite{WeinbergBOOKqft}, p.3), during the last century quantum-field theorists had come 
to the conclusion that ``relativistic wave mechanics, in the sense of a relativistic quantum theory of a fixed 
number of particles, is an impossibility.''
 However, no theorem exists to this effect, and it is hard to imagine that a theorem of this kind, had any been claimed, 
could deliver such a sweeping verdict.
 In contrast to readily believable statements of the kind that 
``it has been proved, according to a certain theory, that  a certain phenomenon is impossible,'' 
claims of the impossibility of a certain kind of theory should always be received with a healthy dose of skepticism.\footnote{Any 
  theorem to this effect would involve certain mathematical assumptions --- which 
  might very well be too restrictive to be physically significant.
     Take for example John von Neumann's assertion \cite{JvN} of \emph{the impossibility of hidden-variable theories that would reproduce 
  the predictions of quantum mechanics}. It fell flat on its face by ignoring the theory of de Broglie \cite{deBroglieSOLVAY} 
(which was rediscovered and further developed by Bohm \cite{Bohm52}.) 
See \cite{BricmontBOOK}, in particular the John Bell quote on p. 257.}
         
 As mathematical physicists we want to find out how much of empirical electromagnetism can be accurately accounted for in terms of 
a relativistic $N$-body quantum theory of electrons, photons, and their anti-particles.
 Inspired by Bell's writings \cite{BellBOOKsec}, we in particular want to find out whether it is possible to formulate
such a theory as a generalization of the non-relativistic theory of de Broglie \cite{deBroglieSOLVAY} and Bohm \cite{Bohm52}
in which the quantum-mechanical wave function $\psi$ guides the actual motion of these particles.
 The non-relativistic theory\footnote{See also \cite{BoHi} and \cite{HollandBOOK}.}
 was worked out in mathematical and conceptual detail by D\"urr, Goldstein, Zangh\`{\i} and their school \cite{DGZ,DT}, 
 who also made serious advances toward its relativistic generalization \cite{BerndlETal,DGNSZ}.
 Ultimately such a theory, if possible at all, would of course have to be generalized to account for the other forces and 
particles of nature; yet, one step at a time. 

 In this paper we concern ourselves mostly with the problem of formulating relativistic quantum-mechanical wave equations for
the fundamental constitutive particles of electromagnetism.
 As is well-known, this is non-trivial already when $N=1$, and truly challenging for $N=2$ interacting particles. 
 On the bright side,  once $N=2$ interacting particles (one electron and one photon) can be handled, the formulation of the 
general $N$-body model should be a straightforward task, assuming $N$-body interactions decompose into a combination of pairwise 
electron-photon interactions.

 There are two $N=1$ sub-problems: the single electron problem and the single photon problem.
 The single electron problem is treated extensively in the literature on quantum mechanics,
whether introductory (e.g. \cite{Dirac1930}, \cite{LandauLifshitzBOOKrelQM}, \cite{BohmBOOKonQM}, \cite{MessiahQMbook}) 
or advanced monographs (e.g. \cite{GreinerETal}, \cite{ThallerBOOK}, \cite{WeinbergBOOKqft}), 
where Dirac's equation \cite{Dirac1928a,Dirac1928b} 
\beq \label{eq:DiracEQelectron} 
\boxed{\ga^\mu p_\mu \psiEL + \mEL c \Id \psiEL =0}
\eeq
is presented as the inevitable Lorentz-covariant \emph{quantum-mechanical} wave equation for a single free electron,
and its generalization via so-called ``minimal coupling'' $p_\mu\mapsto p_\mu + \frac{e}{c}A_\mu$ features as the Dirac wave 
equation for a single electron in ``externally generated, classical electromagnetic fields.'' 
 Yet the precise quantum-mechanical meaning of the negative energy continuum below $-\mEL c^2$ 
has been debated ever since the equation's inception \cite{Dirac1928a,Dirac1928b},\footnote{Recall 
  that, to make sense of his equation, Dirac found it necessary to postulate the existence of uncountably many unobserved
 electrons, occupying the ``negative energy states,'' and to invoke Pauli's exclusion principle.}
and that surely has contributed to the widespread perception amongst quantum physicists expressed above in Weinberg's quote\footnote{Weinberg's assessment is not universally accepted. After all, it cannot be denied that Dirac's wave equation 
for a single electron \refeq{eq:DiracEQelectron} \emph{exists} and has yielded results which compare 
favorably with experimental facts without having recourse to Dirac's ``filled sea of negative energy states.''
 For instance, St\"uckelberg \cite{Stueckelberg} and Feynman \cite{FeynmanD} have proposed that Dirac's wave equation is 
\emph{in effect a quantum-mechanical two-particle equation}, since it seems to account simultaneously for the electron and its 
anti-particle, the positron; see \cite{Thaller} and Thaller's book \cite{ThallerBOOK} for a careful discussion of these ideas. 
 Yet this two-particle interpretation has its own problems. 
 In particular, according to the usual formalism for quantum-mechanical wave functions on $N$-particle configuration space,
Dirac's equation ``for the electron'' simply \emph{is} a one-particle equation, for Dirac's bi-spinor field lives on a 
  \emph{single-}, not a two-particle configuration space(time). 
 A novel interpretation of Dirac's equation as a single-particle equation, in which electron and anti-electron are merely
``two sides of the same coin,'' has been offered in \cite{KTZzGKNDiia}.}.

 By contrast, the single photon problem has fared far worse.
 Even the mere formulation of a one-photon counterpart to Dirac's relativistic quantum-mechanical wave equation for one electron
has frequently been declared either an impossible task, or else a trivial one.
 Both views are still widespread, but clearly they cannot simultaneously be both correct!
  
 Regarding the perceived triviality of this task,  one often finds the claim (mostly in the quantum optics literature) that the electromagnetic 
Maxwell field is the photon wave function, and Maxwell's field equations are the photon wave equation --- in disguise!  
This point of view seems to have its origin in Wigner's seminal work \cite{Wig1939} on the unitary representations of the Poincar\'e group,
 where he seeks to derive relativistic wave equations for particles of any mass and spin, based purely on representation-theoretic 
considerations;\footnote{Incidentally, it should be noted that such a purely algebraic derivation of the 
relativistic wave equations for a particle is not capable of telling us anything about the object that satisfies those equations, beyond 
how that object should transform under Poincar\'e transformations.\vspace{-10pt}}
 see also \cite{BarWig1948}.
  There he writes that for massless particles and ``for $s = 0$ we have simply the equation $\dal \varphi = 0$,
for $s = \pm\frac{1}{2}$ Dirac's electron equation without mass, for $s = \pm 1$ Maxwell's electromagnetic equations, etc." 
 Wigner does not write down these ``Maxwell's [...] equations,'' but in \cite{Wei1964} Weinberg repeats Wigner's assertion, 
and adds ``...they are just Maxwell's free-space equations for left- and right-circularly polarized radiation:
\bna
\nabla \times [\bE - i\bB] + i(\p/\p t)[\bE - i\bB] & = & 0\\
\nabla \times [\bE + i \bB] - i(\p/\p t)[\bE + i\bB] & = & 0,\mbox{"}
\ena
thus making it clear that 
    only the evolutionary subset of Maxwell's equations is obtained.
 More importantly, note that this system of evolutionary equations is redundant, because
the complex fields $\bE - i\bB=:\boldsymbol{\Psi}_-$ and $\bE + i\bB=:\boldsymbol{\Psi}_+$ 
are complex conjugates of each other.
 Yet to ensure invariance under space reflections, both equations are needed to form such an invariant system. 
 While Weinberg, as well as Bia{\l}ynicki-Birula \cite{IBBphotonREV}, take the two complex fields 
to be conjugates of one another, other authors, including Oppenheimer \cite{OppiPHOTON} and Penrose \cite{Pen1976}, 
have compellingly argued that the two fields $\boldsymbol{\Psi}_-$ and $\boldsymbol{\Psi}_+$ 
could not be conjugates, but in fact need to be two independent fields. 
 We will come back to this issue in Appendix \ref{app:others}.

 It is instructive to note that during the ensuing 30+ years, Weinberg switched from one view to the other, for
in \cite{WeinbergTALK}, p.2., we find him state: ``Certainly the Maxwell field is not the wave function of the photon...''.
 We are not aware of when and why exactly he changed his opinion.

 Other authors have expressed serious doubts about the possibility of formulating a relativistic wave equation for a photon wave function.
 The obstructions to setting up such a Lorentz-covariant ``theory of light quanta'' were summarized already by 
Oppenheimer, who identified \vspace{-5pt}
\begin{quotation}
\noindent
``a very grave and inescapable defect of the theory. The [quantum-mechanical wave] equations themselves are co-variant; 
but the Lagrangian from which they and their complex conjugate equations may be deduced is not a [Lorentz] scalar quantity. 
 This has important and disastrous consequences, for it means that the energy density of the quanta is not the [time-time]
component of a second rank tensor, nor the momentum density the [time-space] component of such a tensor.
 Further, and equally disastrous, the density and flux of the quanta do not form a four-vector, so that these quantities
 may surely be given no simple physical meaning.'' (see p.734 in \cite{OppiPHOTON}) 
\end{quotation}

\noindent
 Oppenheimer's last point is often restated as the impossibility to construct a conserved
probability from a  \emph{quantum-mechanical} spin-1 wave equation for a photon
\cite{LandauLifshitzBOOKrelQM,BohmBOOKonQM,WeinbergBOOKqft}.
 For instance, here is Bohm (\cite{BohmBOOKonQM}, p.98.): ``There is, strictly speaking, no function which represents the  
probability of finding a light quantum at a given point;'' and here are Berestickii et al.
(\cite{LandauLifshitzBOOKrelQM}, p.3.): 
``[T]he photon wave function cannot be used to construct a quantity which might serve as a probability density
satisfying the necessary conditions of relativistic invariance.'' 


 In this paper we will argue that none of the above cited opinions is correct: it is feasible to formulate
a Lorentz-covariant quantum mechanical wave equation, and it is \emph{not} ``just [the system of] Maxwell's
free-space equations for left- and right-circularly polarized radiation.'' 
 Instead, we propose that the following Lorentz-covariant Dirac(-type) wave equation, 
\beq \label{eq:DiracEQphoton} 
\boxed{\ga^\mu p_\mu \psiPH  + \mPH c\,\Pi \psiPH= 0}
\eeq
with $\psiPH$ a rank-two bi-spinor field, $\Pi$ a projection onto its ``non-mixed'' part,
and with $\mPH>0$ a mass parameter which will \emph{not} be determined in this paper,\footnote{The parameter $\mPH$ is {needed} 
  to ensure that the two terms summed in \refeq{eq:DiracEQphoton} have the same dimensionality; its \emph{physical significance}
  is presumably determined by an interacting theory.
  However, we will prove that solutions $\psiPH$ to \refeq{eq:DiracEQphoton} also satisfy the massless Klein--Gordon wave 
  equation, which means that the photon is indeed massless in this theory.}
satisfies all the requirements expected of the quantum-mechanical wave equation for a single free photon, 
summarized in the following.
 
From the works of Oppenheimer \cite{OppiPHOTON}, Wigner \cite{Wig1939}, Bargmann--Wigner \cite{BarWig1948}, 
Penrose \cite{Pen1976}, and others, we have extracted a list of
viability requirements for any proposed (special-)relativistic single-photon quantum-mechanical wave function and wave equation.
 There is a general part and a photon-specific part.
\smallskip

 The general requirements apply to {\em any} elementary particle:
\begin{enumerate}
\item The wave equation needs to be covariant will respect to the {\em full} Poincar\'e group of the Minkowski spacetime, consisting 
of continuous space and time translations, spatial rotations, and Lorentz boosts; 
furthermore (unless parity violation is a feature of the theory) spatial reflection (parity transformation) and time-reversal.
\item The wave equation needs to be complex linear, and first order in time and space derivatives.
\item The wave equation needs to be derivable from a scalar Lagrangian, in  order to guarantee existence of Noetherian conservation 
laws (energy, momentum,  angular momentum, centroid location) associated with symmetries of the domain, as well as ``charge''-type 
conserved currents associated with any ``internal" symmetries (such as those of the fiber.)
\item The wave equation must yield a conserved {\em quantum probability current} that transforms like a Lorentz 
vector, is future-directed, and (at least generically) timelike.
 Moreover, this quantum probability must be compatible with Born's rule.
\end{enumerate}

 The photon-specific requirements are:
\begin{enumerate}
\item The wave function must transform like a spin-one representation of the Poincar\'e group.
\item The wave equation must imply a massless dispersion relation.
\item The solutions of the wave equation must have no longitudinal modes.
\item The theory should allow for wave functions of both chirality (left-handed and right-handed waves), without having to change 
the sign of the energy.
\end{enumerate}

 We will show in detail that our wave equation \refeq{eq:DiracEQphoton} satisfies all of the above requirements.
 In particular, we claim that we have overcome the two main obstacles which so far have been perceived as insurmountable:
we show that \refeq{eq:DiracEQphoton} does derive from a relativistically-scalar Langrangian, 
and that it furnishes a conserved probability  ``of finding a light quantum at a given point'' on any spacelike hypersurface 
of the configuration spacetime,  which transforms like a time component of a four-vector, and is
compatible with Born's rule \cite{BornsPSISQUAREpapersA, BornsPSISQUAREpapersB}, \cite{Dirac1930}.

 Earlier proposals by other authors also involved a Dirac-type equation for ``a photon wave function;'' 
see Harish-Chandra \cite{HC1946}, Penrose \cite{Pen1976}, and Bia{\l}ynicki-Birula \cite{BiBiTHREE,IBBphotonREV}. 
 Harish-Chandra \cite{HC1946} found ``half'' of what we propose to be the correct equation;\footnote{More precisely, 
  Harish-Chandra works with a ten-dimensional spin-one representation given by Kemmer's matrices \cite{Kem1939}, 
  which is strictly contained in our representation $\psiPH$.
  Harish-Chandra's work seems to be not widely known.
  However, it can be found almost verbatim in \cite{Cor1953};
  also Struyve's Ph.D. thesis \cite{StruyvePHD} includes a brief review of Harish-Chandra's paper on pp. 31-33.}
in particular, Harish-Chandra does realize that a zero-mass spin-one equation is generally not obtained by taking the zero-mass limit of
a massive spin-one equation.\footnote{\label{fn:mzapp}Incidentally, one won't find a mass parameter ($\mPH$ in our notation) in \cite{HC1946} --- 
revealing his mathematically minded personality, Harish-Chandra set all physical parameters $=1$. 
  Struyve's summary of Harish-Chandra's paper features a mass parameter ($m$ in his notation) in front of Harish-Chandra's projection operator.
  Struyve notes that this is not the mass of the spin-1 quantum particle by proving that each component of Harish-Chandra's photon wave function 
 satisfies the massless Klein--Gordon equation.}
Penrose's proposal \cite{Pen1976} involves a {\em pair} of electromagnetic field tensors, one for left-handed and the other 
for right-handed photons, he emphasizes that they should not be complex conjugates of each other, and writes the Maxwell 
equations satisfied by each one in spinorial form (thus coming close to a Dirac equation-type formulation.)  
However, he does not indicate how they should be combined into a system of equations for a {\em single} object, thus
missing Harish-Chandra's projection operator.
 Bia{\l}ynicki-Birula's equation \cite{BiBiTHREE,IBBphotonREV} (with a precursor in the work of Oppenheimer \cite{OppiPHOTON}) also
``doubles up'' the representation in order to handle chirality, yet involves the self-duality constraint which reduces the description 
once again to half the needed number of terms. 
 Moreover, Bia{\l}ynicki-Birula misses Harish-Chandra's projection term as well.
 In none of these proposals have the issues of conserved energy-momentum and probabilities been handled satisfactorily.
 
 The rest of our paper is structured as follows:
 In Section \ref{sec:PSIphoton}, we introduce our photon wave function $\psiPH$.
 In Section \ref{sec:qmwe} we explain our proposal for the photon wave equation \refeq{eq:DiracEQphoton} 
that $\psiPH$ satisfies (see subsection \ref{sec:QMWE}), and we demonstrate its Lorentz covariance, as well as its gauge invariance ---
see subsections \ref{sec:lorentzcov} and \ref{sec:gaugeinv}, respectively. 
  Our claims about the dispersion relation and transversality are proven in Section \ref{sec:PDE}.  
 Section \ref{sec:conservation} derives all Noether-type conservation laws for the solutions of the photon wave equation from 
its Lagrangian.
  In Section \ref{sec:riesz} we obtain a different set of conservation laws for this equation, not based on its Lagrangian but
on the M. Riesz tensor, and we show how these conserved quantities can be used to construct a quantum probability current for the 
photon.
 In Section \ref{sec:HilbertBorn} the conserved probability will guide us to extract from the relativistic 
photon wave equation a Schr\"odinger equation with photon Hamiltonian, and a conventional $L^2$ inner product Hilbert space formalism,
which shows that our quantum probability is  compatible with Born's rule.  
 Incidentally, also the Einstein relations are extracted from the Hamiltonian.
 The main body of our paper concludes in Section \ref{sec:conclusions} with a summary, and with an outlook on future work.
 There are two appendices. 
 In Appendix \ref{app:others} we contrasts our bottom-up approach with some top-down approaches, popular with the larger 
quantum optics community, which yield ``photon wave functions `for many practical purposes';'' 
see \cite{IBBphotonREV, ScullyBOOK, SmithRaymer, Chandrasekar, Hawton, Mostafazadeh}.
 In Appendix \ref{app:primer} we provide a brief recap of the algebra of spinors and bi-spinors of ranks one and two, 
 for the convenience of the reader. 

\section{The quantum-mechanical wave function of a single photon}\label{sec:PSIphoton}
 In this section we present our choice for the quantum-mechanical wave function of a single photon. 
 For the convenience of the reader, in Appendix \ref{app:primer} we have collected some notational 
choices and pertinent facts on spinors and bi-spinors, which we invite the reader to consult as needed.


We propose the photon wave function to be a {\em rank-two bi-spinor} (see subsection \ref{sec:defPWF}) with trace-free diagonal blocks,
namely
\beq \label{defpsiPH}
\psiPH := \left(\begin{array}{cc} \psi_+ & \chi_-\\ \chi_+ & \psi_- \end{array}\right)\quad \mbox{with}\quad
\tr(\psi_+) = \tr(\psi_-) = 0.
\eeq
 It follows that there are two complex-valued one-forms $\bfa_+$ and $\bfa_-$, and two real-valued two-forms $\bff_+$ and $ \bff_-$ on the 
configuration space-time $\Rset^{1,3}$ such 
that 
\beq \label{Fpm} \chi_+ = \si'(\bfa_+),\quad \chi_-= \si(\bfa_-),\qquad \psi_+ = \Si(\bff_+),\quad
\psi_- = \Si'(\bff_-).
\eeq
(See (\ref{def:si}, \ref{def:sipx}, \ref{def:Siff}) for the definitions of the mappings $\si$, $\si'$, $\Si$, and $\Si'$.) Thus, $\psiPH$ so defined is a particular form of a rank-two bi-spinor. 
 The set of all such $\psiPH$ clearly forms a complex-linear sub-bundle $\cT$ of the bundle $\cB$ of all rank-two bi-spinors.  

Consider the subset $\cS$ of $\cT$ consisting of all $\psiPH$ of the form \refeq{defpsiPH} for which $\bff_+ = \bff_-$ and 
$\bfa_+ = \bfa_-$.
  It is easy to see that $\cS$ is a {\em real}-linear subspace of rank-two bi-spinor space, not a complex-linear one.
  Namely, according to the discussion in Appendix \ref{app:primer}, there must exists a $\xiV \in \Cset^4$ with $\xi^0=0$ such that
$\psi_+ = i \xi^\mu \si_\mu$ and $\psi_- = -i  \xi^\ast{}^\mu \si_\mu$, and this relation is not preserved under multiplication by a 
complex number. 
 We refer to $\cS$ as the subspace of {\em self-dual} bi-spinors.  

Finally, we define the projection operator $\Pi$, whose action on the photon wave function $\psiPH$ amounts to killing off the off-diagonal 
terms $\chi_\pm$:
\beq \label{def:Pi}
\Pi \psi 
:= \Pi_{+}^{} \psi \Pi_{+}^{} + \Pi_{-}^{} \psi \Pi_{-}^{},
\eeq
where $\Pi_\pm$ are the projection operators defined in \refeq{def:projections}.

\section{The quantum-mechanical wave equation for a single free photon}\label{sec:qmwe}
%
 We now explain our proposal that the  \emph{quantum-mechanical} wave equation for the wave
function of a single free photon in physical Minkowski spacetime $\Rset^{1,3}$ is given by  
the Dirac(-type) equation \refeq{eq:DiracEQphoton} for a rank-two bi-spinor field $\psiPH$ on configuration space-time $\Rset^{1,3}$ 
(a copy of Minkowski spacetime), and we demonstrate its covariance under the Lorentz group, and its gauge invariance.

\subsection{Formulation of the wave equation for a single free photon}\label{sec:QMWE}

 To pave the ground for our ``Dirac-type wave equation for the photon,'' we begin by recalling Dirac's wave equation \refeq{eq:DiracEQelectron} 
for a free electron of empirical rest mass $\mEL$ \cite{Dirac1928a,Dirac1928b},\!~viz.\footnote{Here,
as usual, $c$ denotes the speed of light in vacuum, and $\hbar$ denotes the ``reduced'' Planck quantum of action.
 We remark that we could streamline the notation by choosing spacetime units such that $c=1$ and, given that, also
energy units such that $\hbar=1$, 
but we found it helpful to retain both $c$ and $\hbar$ when setting up the model.} 
\beq \label{eq:DiracEQelectronAGAIN} 
{ -i\hbar \ga^\mu \partial_\mu \psiEL + \mEL c \Id \psiEL =0}.
\eeq
 The wave function $\psiEL$ in \refeq{eq:DiracEQelectronAGAIN} is a complex $4\times 1$ matrix-valued unknown function on $\Rset^{1,3}$, 
a rank-one bi-spinor field, which encodes the physicists' classification of the electron as being a ``massive spin-$\frac12$ particle.''  
 We recall that solutions $\psiEL$ of \refeq{eq:DiracEQelectronAGAIN} also satisfy the massive Klein-Gordon equation
\beq \label{eq:massivKGeq}
\dal\psiEL + \tfrac{\mEL^2c^2}{\hbar^2}\psiEL =0.
\eeq

 A photon is classified as a ``massless spin-1 particle,'' and so a Dirac-type wave equation for a photon could be expected 
to differ from \refeq{eq:DiracEQelectronAGAIN} in \emph{at least} two ways: 

\ben

\item 
{The wave function ($\psiPH$, here) of a spin-$1$ particle needs to transform like a {\em bi-spinor field of rank two}
under the action of a Lorentz tranformation $\bLa \in O(1,3)$ (cf. \cite{PenroseRindlerBOOK}), viz.
\beq 
\psiPH \mapsto \bL \psiPH \bL^{-1},
\eeq
where $\bL$ is the projective spinorial representation of $\bLa$ (see our primer on spinors, Appendix \ref{app:primer}) 
--- replacing $\psiEL$ by
such a $\psiPH$ in  \refeq{eq:DiracEQelectron} indeed yields a relativistic wave equation for a spin-1 particle of rest mass $\mEL$;}

\item
{Since the photon is (rest-)massless, one would expect that this means to simply set $\mEL\to 0$ 
in \refeq{eq:DiracEQelectronAGAIN} (after replacing $\psiEL$ by a {bi-spinor field of rank two} $\psiPH$),
so that the ``mass term'' is absent from the photon wave equation.
 However, it follows from the work of Harish-Chandra \cite{HC1946} that a massless spin-1 equation cannot be obtained 
from the equation of a  spin-1 particle with rest mass $m$ by simply setting $m$ in the latter equation equal to zero. 
 Rather, in the lower-order term ($\propto \mEL$ in \refeq{eq:DiracEQelectronAGAIN}) one
needs to replace $\Id$ with a non-trivial \emph{projection} operator $\Pi$ (and $\mEL$ by a yet to-be-determined mass parameter $\mPH>0$ ---
we refrain from writing $\mPHO$ to avoid the inadvertent suggestion that the photon had a rest mass.)
 For a photon this projector is the orthogonal projection onto the subspace of block-diagonal rank-two bi-spinors (see subsection 
\ref{sec:defPWF} for  details.)}

\een

\noindent
 Hence our proposal, that the quantum-mechanical wave equation for a free photon is of the following Dirac-type,
precisely:
\beq \label{eq:DiracEQphotonAGAIN} 
{-i\hbar \ga^\mu \partial_\mu \psiPH  + \mPH c\,\Pi \psiPH= 0},
\eeq
with $\psiPH$ a trace-free rank-two bi-spinor field of a massless spin-1 particle, and with $\mPH>0$ a dimensional constant (whose value will not be determined in this paper.)
 We emphasize that despite the appearance of $\mPH$ in (\ref{eq:DiracEQphotonAGAIN}) we will find
that solutions $\psiPH$ of \refeq{eq:DiracEQphotonAGAIN} also satisfy the massless Klein-Gordon equation 
\beq \label{eq:masslessKGeq}
\dal\psiPH=0.
\eeq
 Moreover, we will see that $\Pi\psiPH$ satisfies a massless Dirac equation. 

 We note that the set of solutions to \refeq{eq:DiracEQphotonAGAIN} clearly forms a complex-linear vector space. 

Below, we will first show the covariance of \refeq{eq:DiracEQphotonAGAIN} under the action of the {\em full} Poincar\'e group; 
then we will formulate the gauge transformations under which it is invariant.

\subsection{Poincar\'e covariance of the photon wave equation}\label{sec:lorentzcov}
%

%
Since \refeq{eq:DiracEQphoton} is a constant coefficient differential equation, it is clearly covariant with respect to time- and space-translations.  We thus only need to check for Lorentz covariance.   We can easily show that  (\ref{eq:DiracEQphoton}) is covariant 
under the {\em full} Lorentz group, just as (\ref{eq:DiracEQelectron}) is:
 Let $\bLa \in O(1,3)$ be an element of the full Lorentz group and let $\bL \in \fL$ be its projective spinorial representation (see subsection \ref{sec:LorentzRep}.)
 Suppose $\psiPH$ is a solution of \refeq{eq:DiracEQphoton}.
 We show that
\beq 
\psiPH'(x) := \bL \psiPH( \bLa^{-1} x) \bL^{-1}
\eeq
is a solution of the same equation.  Let $x' := \bLa^{-1} x$.  
 From the definition of $\Pi$ it easily follows that
\beq 
\label{PiLcomm} 
\Pi \bL \psiPH(x')  \bL^{-1} = \bL \Pi \psiPH(x') \bL^{-1}.
\eeq
Since by assumption $\psiPH$ is a solution of (\ref{eq:DiracEQphoton}), we have
\beq 
 -i \hbar \ga^\mu \frac{\p}{\p {x'}^\mu} \psiPH(x') + \mPH c\Pi \psiPH(x') = 0.
\eeq
Multiplying on the left by $\bL$ and on the right by $\bL^{-1}$, and using \refeq{PiLcomm} we obtain
\beq 
-i\hbar \bL \ga^\mu \frac{\p x^\la}{\p {x'}^\mu} \frac{\p}{\p x^\la} \psiPH(x') \bL^{-1} + \mPH c \Pi \bL\psiPH(x')\bL^{-1} = 0,
\eeq
which implies
\beq 
-i \hbar \bL \bLa^\la_\mu \ga^\mu \bL^{-1}\bL \p_\la \psiPH(x') \bL^{-1} + \mPH c \Pi \bL\psiPH(x')\bL^{-1} = 0.
\eeq
Recalling that the Lorentz-vector-valued rank-2 bi-spinor $\ga^\mu$ transforms as $\ga^\la = \bL \bLa^\la_\mu \ga^\mu \bL^{-1}$, we thus have
\beq 
-i \hbar \ga^\la \p_\la \psiPH'(x) + \mPH c \Pi \psiPH'(x) = 0,
\eeq
establishing the covariance under the full Lorentz group.
%
\subsection{Gauge invariance of the photon wave equation}\label{sec:gaugeinv}
%
 The Dirac-type equation \refeq{eq:DiracEQphoton} is invariant under gauge transformations 
\beq \label{gaugetrans}
\psiPH \mapsto \psiPH' = \psiPH + (\Id - \Pi) \Upsilon,
\eeq
where $\Upsilon$ is a rank-two bi-spinor satisfying a so-called ``massless Dirac equation,'' viz.
\beq \label{eq:DIRACmZEROeqnGAUGE}
\ga^\mu p_\mu \Upsilon = 0.
\eeq
(Obviously, only the off-diagonal blocks of $\Upsilon$ are significant.)  
This can be readily seen by substituting $\psiPH'$ in \refeq{eq:DiracEQphoton} and using the following easily-verifiable identity
\beq \label{iden:gaPi}
\ga^\mu (\Id - \Pi) = \Pi \ga^\mu,\qquad \mu = 0,\dots,3.
\eeq

 It is also easy to see that $\Pi\psiPH$, the {\em gauge-invariant part} of $\psiPH$, satisfies \refeq{eq:DIRACmZEROeqnGAUGE} as well, i.e.
\beq \label{eq:DIRACmZEROeqn}
\ga(p)\Pi\psiPH = 0, 
\eeq
for,
\beq 
\ga(p)\Pi\psiPH = (\Id-\Pi)\ga(p)\psiPH =- \mPH c(\Id - \Pi)\Pi\psiPH = 0.
\eeq

 The gauge-invariance of the diagonal part of the photon wave function, as established in the above,  together with the common 
classification of the photon as a ``mass-zero particle,'' 
would seem to suggest that we could work directly with $\Pi\psiPH$ and \refeq{eq:DIRACmZEROeqn} instead of
 $\psiPH$ and \refeq{eq:DiracEQphoton}.
 However, the off-diagonal part will turn out to be vital for the derivation of \refeq{eq:DiracEQphoton}, and thus of \refeq{eq:DIRACmZEROeqn},
from a Lorentz-invariant Lagrangian.
 The off-diagonal part may also play a role in the incorporation of the interaction of a photon with an electron. 
In Section~\ref{sec:conservation} we will find some novel conservation laws involving both the diagonal and the off-diagonal components 
of $\psiPH$, indicating that the off-diagonal parts cannot be ignored. In subsection \ref{sec:offdiag} we will take a closer look at 
these off-diagonal terms and the equations they satisfy.  
\section{The photon wave equation as a partial differential equation}\label{sec:PDE}
%
 Several features expected of a genuine quantum-mechanical wave equation of a photon follow from the fact that \refeq{eq:DiracEQphoton}
is a partial differential equation. 
 Here we will show that:
\begin{enumerate}
\item 
the dispersion relation of this photon wave equation reads $\omega^2 =  c^2|\kV|^2$;
\item
the diagonal part of solutions to the photon wave equation are transversal.
\end{enumerate}

\subsection{The zero-mass dispersion relation}
%
By applying $\ga(p)$ to  \refeq{eq:DiracEQphoton} and using \refeq{iden:gaPi}, we find
\beq 
\ga(p)\ga(p)\psiPH = - \ga(p)\Pi \psiPH = - (\Id - \Pi) \ga(p) \psiPH = \mPH c (\Id - \Pi)\Pi \psiPH = 0.
\eeq
 Now, $\ga(p)^2 = \Id p_\mu p^\mu  = - \hbar^2 \Id \dal$, and so it follows that each component of a solution $\psiPH$ of \refeq{eq:DiracEQphoton} 
also satisfies the classical wave equation, viz. the partial differential equation
\beq \label{eq:waveEQ}
\dal\psiPH=0;
\eeq
in a quantum-mechanical context \refeq{eq:waveEQ} is known as the massless Klein--Gordon equation.
 This establishes (\ref{eq:DiracEQphoton}), respectively (\ref{eq:DiracEQphotonAGAIN}),
 as a ``zero-mass'' wave equation for all its components, 
despite the occurrence of $\mPH$ in (\ref{eq:DiracEQphoton}), (\ref{eq:DiracEQphotonAGAIN}).
 
 In particular, the dispersion relation for a photon wave function satisfying (\ref{eq:DiracEQphoton}) reads
\beq \label{dispREL}
\omega^2 = c^2|\kV|^2,
\eeq
which agrees with the one expected for a zero rest-mass particle, such as the photon. 
%
\subsection{Transversality (of the diagonal components) of the photon wave function} 
%
We now establish that $\Pi\psiPH$ propagates only in transversal modes.
 Namely, like (\ref{eq:DiracEQelectron}), so also (\ref{eq:DiracEQphoton}) can be written 
as a system of two equations for two spinors, rank-two spinors now.
  Choosing a Lorentz frame as before, and  the Weyl representation \refeq{def:gammas}, 
our Dirac equation for the photon (\ref{eq:DiracEQphoton}) decomposes into
\beq 
\begin{array}{lcl}\label{DIRACeqSPLITdiag}
\phantom{\hbar} (\p_t + c\siV\cdot\nab)\si(\xiV_+) \!\!&=&\!\! 0, \\
\phantom{\hbar} (\p_t - c\siV\cdot\nab)\si'(\xiV_-^*) \!\!&=&\!\! 0,
\end{array}
\eeq
for the diagonal components, and into
\beq 
\begin{array}{lcl}\label{DIRACeqSPLIToffDIAG}
\hbar (\p_t - c\siV\cdot\nab) \si'(\bfa_+) & = & \mPH c^2\si (\xiV_+),\\
\hbar (\p_t + c\siV\cdot\nab) \si (\bfa_-) & = & \mPH c^2\si'(\xiV_-^*),
\end{array}
\eeq
for the off-diagonal ones; here, $\siV$ is the three-vector of the Pauli matrices.

 Now it is well-known \cite{LapUhl1931,OppiPHOTON,IBBphotonREV} that if one sets 
\beq \label{xiISePLUSib}
\xiV = \begin{pmatrix} 0 \\  \be + i \bb \end{pmatrix},
\eeq
for (column-)vector fields $\be:\Rset^3\to\Rset^3$ and $\bb:\Rset^3\to\Rset^3$, then 
\vspace{-5pt}
\beq \label{eq:Weyl}
(\p_t + c\siV\cdot\nab) \si(\xiV) = 0
\eeq
is \emph{formally} equivalent to the Maxwell system of equations for a source-free (formal) electric field $\be$ and a (formal) 
magnetic induction field $\bb$ in the given Lorentz frame, viz.
\beq \label{MaxEB}
\begin{array}{ll}
&\p_t\be - c\nab\times\bb = 0, \qquad \nab\cdot \be = 0,\\
&\p_t\bb + c\nab\times\be = 0, \qquad \nab\cdot \bb = 0.
\end{array}
\eeq
 Thus, the first equation in (\ref{DIRACeqSPLITdiag}) with $\xiV_+ = \xiV$ as in (\ref{xiISePLUSib}), and the second equation
in (\ref{DIRACeqSPLITdiag}) with $\xiV_- =\xiV$ as in (\ref{xiISePLUSib}), each are equivalent to (\ref{MaxEB}).
 This proves absence of longitudinal modes in (\ref{DIRACeqSPLITdiag}).

\begin{rem}\textit{
There is an important distinction between \refeq{eq:Weyl} and \refeq{MaxEB} as field equations on $\Rset^{1,3}$:
  It is well-known that \refeq{MaxEB} is covariant under the {\em full} Lorentz group if one assumes that $\be$ and $\bb$
transform like usual electric and magnetic Maxwell fields.
 Likewise, equation \refeq{eq:Weyl} is covariant under the proper Lorentz group if one assumes that the space part of $\xiV$ 
transforms like $\be+i\bb$, but it would still not be covariant under the full Lorentz group; specifically, it is not covariant 
under parity transformations (space reflections / inversions), even if we assume that under such a reflection $\xiV$ goes to 
$-\xiV^\ast$ (as is suggested by $\be+i\bb\mapsto-\be +i\bb$ under space inversions; recall that the electric $\be$ is a polar 
and the magnetic $\bb$ an axial vector field) --- a space inversion then transforms (\ref{eq:Weyl}) into 
\beq \label{eq:WeylREFLECT}
(\p_t - c \siV\cdot\nab) \si(P\xiV^\ast) = 0,
\eeq 
with $P$ as in \refeq{def:Pinver}. It follows that, to implement the full Lorentz group using a spinorial rank-two representation,
at least the \emph{pair} of equations \refeq{eq:Weyl},\refeq{eq:WeylREFLECT} would be needed, corresponding to the fact that 
\refeq{MaxEB} allows superpositions of both left- and right-handed waves; cf. {\rm \cite{Pen1976}, \cite{IBBphotonREV}}.
 Thus, in a given Lorentz frame, the pair \refeq{eq:Weyl},\refeq{eq:WeylREFLECT} would then be equivalent
to (\ref{DIRACeqSPLITdiag}) \emph{plus the self-duality constraint} $\xiV_- =\xiV_+$.}

\textit{ \emph{However}, $\xiV$ as a \emph{four vector} which in a particular Lorentz frame is identified with $\be + i\bb$ 
through (\ref{xiISePLUSib}) in that frame does not transform into a four vector $\xiV'$ whose space part is given by $\be' +i\bb'$ 
and time component by $0$, where $\be'$ and $\bb'$ are the images of $\be$ and $\bb$ in the new frame.
 This demonstrates that the photon wave equation on the single-photon configuration spacetime $\Rset^{1,3}$ is \emph{not} 
equivalent to the Maxwell field equations on physical spacetime $\Rset^{1,3}$.}
\end{rem}

\subsection{The off-diagonal components of the photon wave function.}\label{sec:offdiag}
 Recall from \refeq{Fpm} that 
\beq
\chi_+ = \si'(\bfa_+),\qquad \chi_- = \si(\bfa_-),
\eeq
where $\bfa_\pm$ are complex-valued 1-forms on the configuration Minkowski space. 
 Let us choose a Lorentz frame as before and denote the components of $\bfa_\pm$ in that frame in the following way:
\beq\label{eq:insidechi}
\chi_+ = \varphi_+\si_0 - \siV \cdot \ba_+,\qquad \chi_- = \varphi_- \si_0 + \siV \cdot \ba_-,
\eeq
 for $\varphi_\pm \in \Cset$ and $\ba_\pm \in \Cset^3$. 
 Let us also recall that, in the same frame, we have
\beq\label{eq:insidepsi}
\psi_+ = i \siV \cdot (\be_+ + i \bb_+),\qquad \psi_- = - i \siV \cdot (\be_- - i \bb_-).
\eeq
  Let $\psiPH$ be a solution of \refeq{eq:DiracEQphoton}. 
 Writing \refeq{DIRACeqSPLIToffDIAG} out in components we obtain
\beq\label{eq:potentials}
\tfrac{1}{c}\p_t \varphi_\pm + \nab \cdot \ba_\pm = 0,
\qquad  
\be_\pm = \tfrac{\hbar}{\mPH c} \left(-\nab \varphi_\pm - \tfrac{1}{c}\p_t \ba_\pm\right),
\qquad 
\bb_\pm = \tfrac{\hbar}{\mPH c} \nab\times \ba_\pm.
\eeq
It thus appears that the relationship of the off-diagonal terms in the photon wave function 
 $\psiPH$ to its diagonal terms is formally the same as that of {\em electromagnetic potentials} 
(in Lorentz gauge) to their corresponding electromagnetic fields.  

Equations \refeq{DIRACeqSPLITdiag} also imply that $\varphi_\pm$ and $\ba_\pm$ must satisfy the classical wave equation: 
\beq\label{eq:waves}
\tfrac{1}{c^2} \p_t^2 \varphi_\pm - \Delta \varphi_\pm = 0,\qquad \tfrac{1}{c^2} \p_t^2 \ba_\pm - \Delta \ba_\pm = 0.
\eeq

Even though the off-diagonal terms appear to have 8 complex, or 16 real degrees of freedom, 
it turns out that half of those are due to gauge freedom. 
 More precisely, we have the following
\begin{prop}
Let $\psiPH = \left(\begin{array}{cc} \psi_+ & \chi_-\\ \chi_+ & \psi_- \end{array} \right)$ be a solution of \refeq{eq:DiracEQphotonAGAIN}. 
 There exists a gauge transformation $\psiPH \mapsto \psiPH + (\Id - \Pi)\Upsilon$, with $\Upsilon$ satisfying \refeq{eq:DIRACmZEROeqnGAUGE},
 such that after applying it, the $\chi_\pm$ are Hermitian matrices (equivalently, 
the $\bfa_\pm$ are real-valued.)
\end{prop}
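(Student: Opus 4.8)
The plan is to read \refeq{eq:potentials} as the statement that, in the chosen Lorentz frame, $(\varphi_\pm,\ba_\pm)$ are Lorentz-gauge four-potentials whose field strengths are the diagonal blocks $\psi_\pm$. Because a gauge transformation \refeq{gaugetrans} changes only the off-diagonal part $\chi_\pm$ (i.e. shifts $\bfa_\pm$) while leaving the gauge-invariant diagonal part $\Pi\psiPH$ --- and hence the fields $\be_\pm,\bb_\pm$ --- untouched, the assertion that $\bfa_\pm$ can be made real is equivalent to the assertion that $\imag\bfa_\pm$ is pure gauge. So I would first note that the fields are real: by the very definition \refeq{Fpm} of our wave function the two-forms $\bff_\pm$ are real-valued, whence the $\be_\pm,\bb_\pm$ appearing in \refeq{eq:insidepsi} and \refeq{eq:potentials} are real three-vector fields.

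Next I would turn the reality of the fields into a statement about $\imag\bfa_\pm$. Reading the last two relations of \refeq{eq:potentials} as $\be_\pm\propto-\nab\varphi_\pm-\tfrac1c\p_t\ba_\pm$ and $\bb_\pm\propto\nab\times\ba_\pm$, the (complex) electromagnetic field strength is the exterior derivative $\dd\bfa_\pm$; its reality forces $\dd(\imag\bfa_\pm)=0$, so the real one-form $\imag\bfa_\pm$ is closed, and by the Poincar\'e lemma on the contractible configuration spacetime $\Rset^{1,3}$ there is a real scalar $\lambda_\pm$ with $\imag\bfa_\pm=\dd\lambda_\pm$. Feeding this into the imaginary part of the Lorentz-gauge (divergence) condition --- the first relation in \refeq{eq:potentials} --- yields $\dal\lambda_\pm=0$, so $\lambda_\pm$ solves the same scalar wave equation as the potentials do in \refeq{eq:waves}.

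Finally I would exhibit the gauge field. Let $\Upsilon$ be the purely off-diagonal rank-two bi-spinor carrying $-i\,\si'(\dd\lambda_+)$ in the slot of $\chi_+$ and $-i\,\si(\dd\lambda_-)$ in the slot of $\chi_-$. Since $\dal\lambda_\pm=0$, a short computation using $(\siV\cdot\nab)^2=\Delta$ shows that $\si'(\dd\lambda_+)$ and $\si(\dd\lambda_-)$ solve the homogeneous Weyl equations into which the massless Dirac equation \refeq{eq:DIRACmZEROeqnGAUGE} decomposes in the Weyl representation \refeq{def:gammas}; hence $\ga^\mu p_\mu\Upsilon=0$ and $\Upsilon$ is an admissible gauge field. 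Applying $\psiPH\mapsto\psiPH+(\Id-\Pi)\Upsilon=\psiPH+\Upsilon$ then replaces $\bfa_\pm$ by $\bfa_\pm-i\,\dd\lambda_\pm=\real\bfa_\pm$, which is real-valued; equivalently, by \refeq{eq:insidechi}, the new $\chi_\pm$ are Hermitian, as claimed.

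The step I expect to be the real work is the last verification: pinning down, with the sign conventions fixed by \refeq{def:gammas} and the block split \refeq{DIRACeqSPLIToffDIAG}, that $\si'$ and $\si$ send the gradient $\dd\lambda_\pm$ of a wave-equation solution to genuine solutions of the correct source-free Weyl equations, so that $\Upsilon$ indeed satisfies \refeq{eq:DIRACmZEROeqnGAUGE} and the shift lands exactly on $\chi_\pm$. The conceptual core --- that reality of the diagonal field blocks forces $\imag\bfa_\pm$ to be a closed, wave-equation-satisfying one-form, i.e. pure gauge --- is what makes the construction possible, and I expect the remainder to be bookkeeping.
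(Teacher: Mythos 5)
Your argument breaks down at the step where you conclude that $\dd(\imag\bfa_\pm)=0$. Equation \refeq{eq:potentials} has to be read as the single complex relation $\be_\pm+i\bb_\pm=\tfrac{\hbar}{\mPH c}\bigl[(-\nab\varphi_\pm-\tfrac1c\p_t\ba_\pm)+i\,\nab\times\ba_\pm\bigr]$ with \emph{complex} $\varphi_\pm,\ba_\pm$ --- that is what \refeq{DIRACeqSPLIToffDIAG} actually yields, and it is how the paper's own proof uses it (note the cross terms $-\nab\times\ba_I$ and $-\nab\varphi_I-\p_{ct}\ba_I$ on the right-hand sides of \refeq{eq:syspot}). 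Equivalently: since $\psi_\pm=\Si(\bff_\pm)$ and $\Si$ annihilates anti-self-dual two-forms, the off-diagonal equation pins down only the self-dual part of $\dd\bfa_\pm$, i.e.\ the real combination $\dd(\real\bfa_\pm)-\star\,\dd(\imag\bfa_\pm)$, not $\dd(\real\bfa_\pm)$ and $\dd(\imag\bfa_\pm)$ separately. Reality of $(\be_\pm,\bb_\pm)$ therefore does \emph{not} force $\imag\bfa_\pm$ to be closed: the imaginary part of the potential feeds into the real fields through its Hodge-dualized field strength. Indeed, the admissible gauge one-forms $\bc=\bc_R+i\bc_I$ (Lorenz condition plus ${}^{\mathrm{sd}}(\dd\bc)=0$) include all $\bc$ with $\dd\bc_R=\star\dd\bc_I\neq 0$, e.g.\ plane-wave vacuum potentials; acting with such a gauge on any solution produces a solution with $\dd(\imag\bfa_\pm)\neq 0$, for which no scalar $\lambda_\pm$ with $\imag\bfa_\pm=\dd\lambda_\pm$ exists. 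Your $\Upsilon$, which only subtracts an exact imaginary part, then cannot render $\chi_\pm$ Hermitian.

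The repair is essentially the paper's route: do not try to massage the given $\bfa_\pm$, but construct from scratch a \emph{real} Lorenz-gauge four-potential $(\tilde\varphi_\pm,\tilde\ba_\pm)$ for the real pair $(\be_\pm,\bb_\pm)$ --- this exists by classical electromagnetic theory, because \refeq{DIRACeqSPLITdiag} guarantees that $(\be_\pm,\bb_\pm)$ solves the source-free Maxwell system \refeq{MaxEB}. Setting $\tilde\chi_\pm$ accordingly, both $\tilde\chi_\pm$ and $\chi_\pm$ satisfy the same inhomogeneous Weyl equation with source $\propto\psi_\pm$, so their difference solves the homogeneous massless equation \refeq{eq:DIRACmZEROeqnGAUGE} and is an admissible gauge. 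Your closing verification --- that $\si'(\dd\lambda)$ with $\dal\lambda=0$ solves the Weyl equation --- is correct, but such pure gradients exhaust only the ``scalar'' part of the gauge freedom, and it is precisely the larger, non-gradient part that is needed here.
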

\begin{proof}
 We will prove the claim for $\chi_+$. 
 The argument for $\chi_-$ is identical. 
 To simplify notation we will henceforth drop the ``$+$" subscript. 
We know that $\chi$ satisfies the following equation:
\beq\label{eq:chi}
-i\hbar \si(\p) \chi + \mPH c \psi = 0.
\eeq
Let $\varphi= \varphi_R + i\varphi_I$ and $\ba =\ba_R + i \ba_I$  be the decomposition into real and imaginary parts of $\varphi$ and $\ba$. 
 Suppose we can show there exist real-valued functions $\tilde{\varphi}$ and $\tilde{\ba}$ such that 
\bna
-\nab \tilde{\varphi} - \p_{ct} \tilde{\ba} & = & - \nab \varphi_R - \p_{ct} \ba_R - \nab \times \ba_I \nonumber\\
\nab \times \tilde{\ba} & = & \nab \times \ba_R - \nab \varphi_I - \p_{ct} \ba_I \label{eq:syspot} \\
\p_{ct} \tilde{\varphi} + \nab \tilde{\ba} & = & 0. \nonumber
\ena
 If the above holds, it is then obvious from \refeq{eq:potentials} that 
\beq
\be+i \bb = \tfrac{\hbar}{\mPH c}\left[ (-\nab \tilde{\varphi} - \p_{ct} \tilde{\ba}) + i ( \nab \times \tilde{\ba})\right].
\eeq
 Therefore, if we let $\tilde{\chi} := \tilde{\varphi} \si_0 - \siV \cdot \tilde{\ba}$ then $\tilde{\chi}$ is Hermitian and we have 
\beq
-i\hbar \si(\p)\tilde{\chi} + \mPH c \psi = 0.
\eeq
Subtracting \refeq{eq:chi} from the above equation we obtain that $\upsilon := \tilde{\chi} - \chi$ is a gauge, which establishes the claim.

To prove that \refeq{eq:syspot} has a solution it is enough to check that, by virtue of \refeq{eq:waves}, both $\tilde{\varphi}$ and 
$\tilde{\ba}$ satisfy the classical wave equation, for which real-valued solutions with initial data compatible with \refeq{eq:syspot} 
can be easily found. 
\end{proof}

\section{Lagrangian formulation and Noetherian conservation laws}\label{sec:conservation}
 In this section we show first that our Dirac-type wave equation for the photon \refeq{eq:DiracEQphotonAGAIN} 
derives from a Lorentz-scalar action principle.
 We next derive the full set of Noetherian conservation laws for (\ref{eq:DiracEQphoton}) that follow from this Lagrangian.
 Beside the anticipated laws of energy, momentum, and angular-momentum conservation, also included are some novel laws which express
some form of cross-chirality conservation for the left-handed and right-handed parts of the photon wave function. 
 
\subsection{The action principle}\label{sec:action}
The Dirac equation \refeq{eq:DiracEQphoton} is the Euler--Lagrange equation for an action functional with (real) scalar Lagrangian density given by
\beq \label{Lagrangian}
\cLph'=  
 \frac{\hbar c}{16\pi i} \tr\left({\psiPHb} \gamma^\mu_{} \p_\mu^{} \psiPH - \p_\mu \psiPHb \ga^\mu \psiPH\right) 
+ \frac{\mPH c^2}{8 \pi} \tr \left( \psiPHb \Pi \psiPH \right),
\eeq
where the Dirac adjoint $\psiPHb$ and the trace operation $\tr$ were defined in subsection \ref{sec:clifford} 
(See \refeq{def:Diradj} and Remark \ref{rem:diradj}.)

 We  observe that $\cLph' \equiv 0$ when evaluated on any solution of the Euler--Lagrange equations \refeq{eq:DiracEQphoton}. 

 One notes that the above Lagrangian is not gauge-invariant.  
 This can be remedied in the following way.  
 The Lagrangian density, defined by
\beq \label{gaugeinvlag}
8\pi \cLph:= -i \hbar c \tr \left( \overline{\Pi\psiPH} \ga^\mu \p_\mu \psiPH - \p_\mu\psiPHb \ga^\mu \Pi \psiPH \right) 
+ \mPH c^2 \tr \left( \psiPHb \Pi \psiPH \right)
\eeq
is invariant under gauge transformations \refeq{gaugetrans} and gives rise the to same Euler--Lagrange equation \refeq{eq:DiracEQphoton},
as can be easily checked using \refeq{iden:gaPi}. 
 We note that, unlike $\cLph'$ however, this Lagrangian density does not vanish along solutions of the Euler-Lagrange equations,
thus leading to more complicated expressions in the forthcoming calculations of conservation laws.

\subsection{Noether symmetries and conservation laws}
 If one views $\psiPH$ as a section of the spin bundle $\cT$ whose base manifold is the configuration space(time) $\Rset^{1,3}$ (a copy 
of Minkowski spacetime) and whose fibers are rank-two bi-spinors of the form \refeq{defpsiPH}, then according to Noether\rq{}s 
Theorem \cite{Noe1918}  (see \cite{Chr2000} for precise statements), to every flow on $\cT$ that preserves the Lagrangian \refeq{Lagrangian} 
there corresponds a quantity that is conserved along any solution of the corresponding Euler--Lagrange equations. 
 A flow on $\cT$ meanwhile, could be a ``vertical" flow that is purely in the fiber, i.e. it does not move the base point, or a ``horizontal" one 
that moves the base point, and may also  have a nontrivial action on the fibers in order to maintain covariance. 
  We will look at both types below.

\subsubsection{Conservation laws due to symmetries of the fibers (helicity and such)}
 Let $G$ be a section of the bundle $\cB$ of rank-2 bi-spinors over the configuration Minkowski space, and consider the flow generated by the right 
action 
of $G$ on rank-two bi-spinors $\psi \in \cB$:
\beq 
\psi_s := \psi e^{isG},\qquad s \in (-\ep,\ep).
\eeq
 The vector field whose integral curves give the flow lines of this flow is
\beq 
Z := \left.\frac{d}{ds}\right|_{s=0} \psi_s = i \psi G.
\eeq
We look for conditions on $G$ such that the above flow preserves the Lagrangian \refeq{Lagrangian}. 
 In order to do so, the flow must preserve the sub-bundle $\cT$ of rank-2 bi-spinors with trace-free diagonal blocks.  
It follows that one must have 
\beq  \tr (\psiPH G ) = 0,
\eeq for all $\psiPH \in \cT$.  This condition severely restricts $G$, and it is not hard to check that $G$ can only be of the form
\beq \label{Gdiag} G = \left(\begin{array}{cc} \la_1 \Id_2 & \\ & \la_2 \Id_2 \end{array}\right),\qquad \la_1,\la_2 \in \Cset.
\eeq
On the other hand, it is easy to see that for any $G$, $\overline{e^{isG}} = e^{-is\overline{G}}$. 
 So
\beq 
 \tr\left( e^{-is\overline{G}} [ \psiPHb\ga^\mu \p_\mu \psi -  \p_\mu \psiPHb \ga^\mu \psi ] e^{isG} \right) =  \tr\left( \psiPHb\ga^\mu \p_\mu \psi -  \p_\mu \psiPHb \ga^\mu \psi \right),
\eeq
provided 
\beq \label{Gselfadj}\overline{G} = G.
\eeq
 Assume this holds. 
 A direct computation shows that for any invertible $2\times 2$ matrix $B$, one has
\beq 
\tr \left( B^{-1} \psiPHb \Pi (\psiPH B)\right) = \tr (B^{-1} \psiPHb (\Pi\psiPH)B) = \tr \left(\psiPHb \Pi \psiPH\right). 
\eeq
 Setting $B = e^{isG}$ one then sees that the Lagrangian \refeq{Lagrangian} will be preserved under the flow of $Z$, and 
therefore the Lie derivative of $\cL$ with respect to $Z$ must be zero. 
 Meanwhile the two restrictions \refeq{Gdiag} and \refeq{Gselfadj} together imply that there exist $a,b\in\Rset$ such that
\beq 
G = a \Id_4 + i b \ga^5.
\eeq

 According to \cite{Chr2000}, the conserved Noether current corresponding to $Z$ is $j_Z^\mu = p^\mu_a Z^a$ where $p^\mu_a$ are 
the {\em canonical momenta} (i.e. the derivative of the Lagrangian density with respect to the canonical velocities.)
 We thus have $\nab_\mu j_Z^\mu = 0 $ where $\nab$ represents the covariant derivative on the spin bundle $\cB$, with respect to 
which both the metric tensor $\boldsymbol{\eta}$ and the Dirac gamma matrices are constant. 
 For the photon Lagrangian \refeq{Lagrangian} we have
\beq \label{helicityCURRENT}
j_Z^\mu = p^\mu_a Z^a = \frac{\p \cLph'}{\p(\p_\mu \psiPH)}(i\psiPH G) =  \tr\Big[\frac{1}{16 \pi i} \psiPHb \ga^\mu (i \psi G)\Big] =
 \frac{1}{16 \pi}\tr\left(\psiPHb\ga^\mu \psi G\right).
\eeq

We thus have a 2 parameter family of conserved currents associated with vertical symmetries of the bundle $\cT$, since we have, for
 each $G$ as in the above,
\beq 
\nab_\mu j_Z^\mu = \nab_\mu \left\{ \frac{1}{16 \pi}\tr\left( \psiPHb \ga^\mu \psiPH G \right) \right\} = 0.
\eeq
 In particular $j_Z^0$, the time component of the current, has the following general form in terms of the photon wave function $\psiPH$:
 There exist  $a,b\in \Rset$ such that 
\bna\label{nongaugeconslaw}
j_Z^0 & = &  \frac{a}{16 \pi} \tr \{ \chi_-^\dag \psi_+ +\psi_+^\dag \chi_- + \chi_+^\dag \psi_- + \psi_-^\dag \chi_+ \} \\
& &+  i\frac{b}{16 \pi} \tr \{ \chi_-^\dag \psi_+ - \psi_+^\dag \chi_- + \chi_+^\dag \psi_- -\psi_-^\dag\chi_+\}.
\ena
 The above expression is of the form of a ``cross-helicity'' of the $\pm$ components of the wave function.
 Helicity currents such as \refeq{helicityCURRENT} are {\em not} gauge-invariant no matter what one chooses for $a$ and $b$;
however, at least in the case $b=0$, integrating $j_Z^0$ over $t=  \mbox{const.}$ hypersurfaces $\Si_t$ (defined by the choice 
of frame $\{\bu_{(\mu)}\}_{\mu = 0}^3$,)  yields a conserved gauge-independent quantity that could be of possible topological significance. 
More precisely, for $a=1$ and $b=0$ we have, using the decompositions \refeq{eq:insidechi} and \refeq{eq:insidepsi} with respect to the chosen 
frame, that
\beq
j_Z^0  =  \frac{1}{8 \pi} (\bb_- \cdot \ba_+ - \bb_+ \cdot \ba_-) = \frac{\hbar}{8 \pi \mPH c}\nab\cdot (\ba_- \times \ba_+).
\eeq
Hence, the integral of $j_Z^0$ on a spatial slice $\Si_t$ is not only conserved in time, it in fact only depends on the leading-order 
asymptotic behavior at spatial infinity of the two formal ``magnetic potentials" $\ba_\pm$, since the integrand is a complete spatial 
divergence.\footnote{In the terminology of Christodoulou \cite{Chr2000} this is a {\em boundary current} for the Lagrangian theory. 
 Well-known examples of boundary currents are the ADM quantities in General Relativity.} 
 Moreover, it is easy to see that the integral will be gauge-independent. 
We can thus define the time- and gauge-independent quantity
\beq
\Xi := \frac{\hbar}{8 \pi \mPH c}\lim_{r\to \infty} \int_{S_r} (\ba_- \times\ba_+)\cdot \bn dS
\eeq
where $S_r$ denotes the sphere of radius $r$ centered at the origin in the spatial slice $\Si_t$. 
 Note that $\Xi$ is a dimensionless quantity.
 When finite, it carries some global (presumably topological)
information about the wave function, and it obviously depends only on its off-diagonal blocks $\chi_\pm$. 

 We will continue to pursue our inquiry into the significance of $\Xi$ in a separate publication. 
 Here instead we focus on obtaining point-wise gauge-invariant conserved quantities.

\subsubsection{Conservation laws due to symmetries of the domain (energy-momentum, etc.)}\label{sec:energymom}
 Each of the two Lagrangians \eqref{Lagrangian} and \eqref{gaugeinvlag} give rise to a {\em canonical stress} 
\cite{Chr2000} through the Legendre transform:
\beq  T_{\mu\nu} = \frac{\p \cLph}{\p (\p_\nu \psiPH)}\p_\mu\psiPH - \eta_{\mu\nu} \cLph.\eeq
One has
\beq \label{Tmunu}
T_{\mu\nu} = \frac{\hbar c}{8 \pi i} \tr \left( \overline{\Pi\psiPH}\ga_\nu\p_\mu \psiPH - \p_\mu \psiPHb \ga_\nu \Pi\psiPH \right) 
+\frac{\mPH c^2}{8\pi} \tr\left( \psiPHb \Pi \psiPH \right) \eta_{\mu\nu}
\eeq
as the canonical stress corresponding to $\cLph$ and
\beq \label{Tprimemunu}
T'_{\mu\nu} = \frac{\hbar c}{16 \pi i} \tr \left( \psiPHb \ga_\nu \p_\mu \psiPH - \p_\mu \psiPHb \ga_\nu \psiPH\right)
\eeq
as the canonical stress corresponding to $\cLph'$.   

These two objects are related: their difference is a complete divergence:
\beq \label{TTprimediff}
T_{\mu\nu} - T'_{\mu\nu} = 
\frac{\hbar c}{16 \pi i}\p^\la\tr\left\{  \psiPHb(\eta_{\mu\nu}\ga_\la - \eta_{\mu\la}\ga_\nu )\Pi\psiPH - \mbox{adj.}\right\}
\eeq
Here and henceforth $\mbox{adj.}$ is shorthand notation for the adjoint of the immediately preceding expression.

It is easy to see that $\nab^\nu T_{\mu\nu} = \nab^\nu T'_{\mu\nu} = 0_\mu$ holds along solutions $\psiPH$ of the Euler-Lagrange equations.
  One notes however that neither of the above two canonical stresses are symmetric (in their two lower indices), and thus cannot be readily 
identified with an energy-momentum tensor of the theory. 
 To find such an object we need to symmetrize. 
 A computation shows (cf. \cite{HC1946}) that
\beq 
T_{\mu\nu} = 
\Theta_{\mu\nu}+\frac{\hbar c}{64 \pi i} \p^\la \tr \left\{ \psiPHb (\ga_\mu\ga_\nu\ga_\la - \ga_\nu\ga_\mu\ga_\la) \Pi\psiPH - \mbox{adj.} \right\}
\eeq
where
\beq 
\Theta_{\mu\nu} = \half (T_{\mu\nu} + T_{\nu\mu}) =
\frac{\hbar c}{16 \pi i} \tr\left\{ \Pi\psiPHb (\ga_\nu \p_\mu + \ga_\mu\p_\nu)\psiPH - \mbox{adj.} \right\} 
+\frac{\mPH c^2}{8 \pi} \tr \left(\psiPHb \Pi\psiPH\right)\eta_{\mu\nu}.
\eeq
Similarly,
\beq 
T'_{\mu\nu} = 
\Theta'_{\mu\nu}+\frac{\hbar c}{64 \pi i} \p^\la \tr\left\{ \psiPHb (\ga_\nu\ga_\la \ga_\mu - \ga_\mu \ga_\la \ga_\nu ) 
\Pi \psiPH - \mbox{adj.} \right\},
\eeq
with 
\beq 
\Theta'_{\mu\nu} = \half( T'_{\mu\nu} + T'_{\nu\mu}) 
= \frac{\hbar c}{32 \pi i} \tr \left\{ \psiPHb \ga_\nu \p_\mu \psiPH + \psiPHb \ga_\mu \p_\nu \psiPH - \mbox{adj.} \right\}.
\eeq
Either one of $\Theta_{\mu\nu}$ and $\Theta'_{\mu\nu}$ (which by virtue of \refeq{TTprimediff} differ only by a full divergence) can 
now be used as an energy-momentum tensor, since they are symmetric and divergence-free. 
 Upon contracting such an object with a Killing field (generator of continuous isometry) of the domain, one obtains a conserved Noether current.
  Since the domain is the Minkowski space in our case, one thus obtains the familiar ten integral conservation laws of energy, linear momentum, 
angular momentum, and centeroid location in this way, corresponding to time translations, space translations, spatial rotations, and Lorentz 
boosts, respectively.  

More precisely, let $(x^\mu)_{\mu=0}^3$ denote a system of global rectangular coordinates on the Minkowski space $\Rset^{1,3}$ and 
let $\{ \bu_{(\mu)} = \frac{\p}{\p x^\mu} \}$ be the corresponding Lorentz-orthonormal coordinate frame field. 
 Then each of the four vector fields $\bu_{(\mu)}$ is a Killing field for the Minkowski space, thus giving rise to four conserved 
currents $\{\bP_{(\mu)}\}_{\mu=0}^3$ with components
\beq
P_{(\mu)}^\nu := \Theta'^\nu_\la u_{(\mu)}^\la = \Theta'^\nu_\mu
\eeq
and the corresponding time-independent (i.e. constant) integral quantities that are finite for wave functions of sufficiently rapid decay:
\beq\label{def:EnMom}
\cE := \int_{\Rset^3} {\Theta'}^0_0 d^3x,\qquad \cP_k := \int_{\Rset^3} {\Theta'}_k^0 d^3x,\quad k = 1,2,3.
\eeq
In particular, using \refeq{eq:insidechi}, \refeq{eq:insidepsi}, \refeq{eq:potentials}, and the fact that \refeq{MaxEB} holds for both 
pairs $(\be_+,\bb_+)$ and $(\be_-,\bb_-)$, we have that
\bna
8\pi {\Theta'}^0_0 & = & \frac{\hbar c}{2i} \tr \{ \psiPHb \ga^0\p_0 \psiPH - \p_0\psiPHb \ga^0 \psiPH \} \nonumber\\
& = & \frac{\hbar c}{2i} \tr\left( \psi_-^\dag \p_0 \chi_+ + \chi_-^\dag \p_0 \psi_+ + \chi_+^\dag \p_0 \psi_- 
+ \psi_+^\dag \p_0 \chi_- - \mbox{adj.}\right) \nonumber\\
& = & {\mPH c^2} \left[ \be_+\cdot \be_- + \bb_+ \cdot \bb_- \right] - 
{\hbar c } \nabla \cdot \left[ \phi_- \be_+ - \ba_- \times \bb_+ +\phi_+\be_- - \ba_+ \times \bb_- \right].
\label{EnMomEXPL}
\ena
 (Note that the divergence term vanishes upon integration over all space; it may as well be ignored, therefore.)
 When evaluated on a {\em self-dual} wave function, i.e. one for which $\be_+ = \be_-$ and $\bb_+ = \bb_-$, 
the energy density of the photon wave function $\psiPH$
defined in \refeq{EnMomEXPL} resembles the well-known expression for the energy density of 
a formal electromagnetic field $(\be,\bb)$,\footnote{The appearance here of the 
   mathematical expression for the energy density of a classical electromagnetic field is one indication that
   Maxwell's classical electromagnetic field theory may emerge from an appropriate $N$-body generalization of our 
   quantum-mechanical photon theory  in the large $N$ limit through the Law of Large Numbers.}
 and is therefore nonnegative.
 For a general wave function, however, this quantity does not have a sign, in agreement with our understanding of the existence 
of photons with both positive and negative energies.

\section{A quantum probability current for the location of a photon}\label{sec:riesz}

  None of the conserved currents we have discussed so far can play the role of 
the correct generalization of Dirac's quantum probability current $j_{\mbox{\tiny el}}^\mu = \psiELb\ga^\mu \psiEL$ for the 
location of an electron.
  All of them suffer from one or more of the following problems: wrong transformation law 
(i.e. not a Lorentz four vector); or if they transform properly, either lack of pointwise gauge-invariance, 
or lack of positivity for the time component. 

  Interestingly, yet another set of conservation laws for the photon equation exists, which, unlike the Noetherian laws described in the 
previous section, are not derivable from the Lagrangian \refeq{Lagrangian}; indeed, not from any action principle for \refeq{eq:DiracEQphoton}
where variations are to be taken with respect to the wave function itself.\footnote{They can be derived from a ``sub-action principle,'' though,
 indeed well-known from classical electromagnetic field theory, but which can only be used to derive a subset of the full set of equations which
 constitute our photon wave equation. 
 Moreover, to carry out the variations one has to invoke another subset of these equations by fiat.}
 These conservation laws were first discovered by M. Riesz \cite{Rie1946}, in connection with his generalization 
of the massive Dirac equation to rank-two bi-spinors (or {\em Clifford numbers}, as he called them.)
 We will find amongst the conserved Riesz quantities a distinguished four-vector with a positive time component
that, after normalization, qualifies as a quantum probability:
 it plays the desired role of a non-negative probability density ``of finding the light quantum at a given point on a 
given spacelike hypersurface,'' and it transforms in the right manner under the full Lorentz group.

\subsection{The Riesz tensor and its conservation laws}

In a mathematically ingenious work \cite{Rie1946} that remains under-appreciated by the physics community\footnote{See 
  Garding \cite{GardingHISTORY} p. 229,  for an account of the reception of M. Riesz's work on quantum mechanics among 
  his physicist contemporaries.}, 
the great analyst Marcel Riesz argued that, since the Dirac electron wave function --- a rank-one bi-spinor --- belongs to 
a {\em minimal left ideal} of the Clifford algebra $\mbox{Cl}_{1,3}(\Rset)_\Cset$ (see subsection \ref{sec:clifford}), in order to better understand the Dirac equation 
and its many symmetries and conservation laws, one should first {\em extend it to the complete Clifford algebra}, find all the 
conservation laws that one can, and then descend back to the minimal left ideal. 
 He thus advocated for the study of the following Dirac-type equation for a massive, Clifford-algebra-valued section $\psi$:
\beq\label{eq:DiracRiesz}
-i\hbar \ga^\mu \p_\mu \psi + m c \Id \psi = 0,\qquad \psi \in \mbox{Cl}_{1,3}(\Rset)_\Cset \cong M_4(\Cset).
\eeq
 Riesz in particular showed that the rank-two tensor 
\beq\label{RieszTENSORdef}
\bt^{\mu\nu} := (\overline{\psi} \ga^\mu\psi\ga^\nu)_S^{}
\eeq
 (recall that $a_S^{}$ denotes the scalar part of a Clifford number $a\in \mbox{Cl}_{1,3}(\Rset)_\Cset$)  
will be divergence-free along solutions of \refeq{eq:DiracRiesz}, viz.
\beq\label{RieszConservationLaw}
\p_\mu \bt^{\mu\nu} = 0
\eeq 
for any solution $\psi$ of \refeq{eq:DiracRiesz}.
 Riesz further observed that contracting this tensor with the unit timelike vector $X = (1,0,0,0)^T$ yields a conserved 
positive-definite quantity, which he identified with the electron current previously found by Dirac.  

 Our photon wave equation \refeq{eq:DiracEQphoton} differs from Riesz's equation \refeq{eq:DiracRiesz} in one crucial way:
in place of the identity operator in \refeq{eq:DiracRiesz} we have the projection operator $\Pi$.
  It is not hard to see that, since $\Pi$ does not commute with the $\ga^\nu$, the conservation law 
\refeq{RieszConservationLaw} does {\em not} hold for our photon wave equation. 
 Nevertheless, we do obtain a conserved Riesz tensor provided we restrict ourselves to the projected wave function $\Pi\psiPH$, 
i.e. to the {\em diagonal blocks} of $\psiPH$.

 Recall that if $\psiPH$ is a solution of \refeq{eq:DiracEQphoton}, then $\phiPH :=\Pi\psiPH$ solves the massless Dirac equation
\beq \label{masslessDirac}
i\ga^\mu \nab_\mu \phiPH = 0.
\eeq
We now use the well-known ``method of multipliers" to obtain the system of conservation laws
\beq \label{gaugeinvconslaw}
\nab_\mu \left( \overline{\phiPH} \ga^\mu \phiPH \ga^\nu\right) = 0.
\eeq
 To accomplish that, we consider equation \refeq{masslessDirac} and its adjoint
\beq \label{adjointmasslessDirac}
-i \nab_\mu \overline{\phiPH} \ga^\mu = 0.
\eeq
 Let us multiply \refeq{masslessDirac} on the left by $\overline{\phiPH}$ and on the right by $\ga^\nu$. 
 Let us also multiply the adjoint equation \refeq{adjointmasslessDirac} on the right by $\phiPH \ga^\nu$.
  Subtracting the two resulting equations and using that our $\gamma$ matrices are covariantly constant, 
we readily obtain \refeq{gaugeinvconslaw}.

 We are thus motivated to define a two-indexed object $\boldsymbol{\tau}$ with components
\beq \label{def:tau}
\tau_{\mu\nu} := \frac{1}{4}\tr\left( \overline{\phiPH} \ga_\mu \phiPH \ga_\nu \right).
\eeq
 We will call $\btau$ the {\em Riesz tensor} of $\psiPH$, in honor of its discoverer M. Riesz \cite{Rie1946}.  
 We have
\begin{lem}
$\boldsymbol{\tau}$ is a real symmetric Lorentz-covariant four-tensor of rank two, and along solutions of the Dirac equation satisfied by $\psiPH$, 
it satisfies the conservation laws
\beq \label{taudivfree}
\nabla^\mu \tau_{\mu\nu} = 0,\qquad \nu = 0,1,2,3,
\eeq
where $\nabla$ is the covariant derivative on spinors, with respect to which both the metric tensor and the Dirac matrices are constant. 
\end{lem}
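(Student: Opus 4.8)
The plan is to establish the four asserted properties of $\btau$ in turn, noting at the outset that divergence-freeness is essentially already in hand and that symmetry is the only genuinely delicate point.

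\emph{Divergence-freeness.} This follows directly from the conservation law \refeq{gaugeinvconslaw} obtained above by the method of multipliers. Since the $\ga^\mu$ are covariantly constant, $\nab$ commutes with the trace, so lowering the free index in \refeq{gaugeinvconslaw} and taking $\tfrac14$ of the matrix trace gives $\nab^\mu\tau_{\mu\nu}=\tfrac14\tr\!\big(\nab_\mu(\overline{\phiPH}\ga^\mu\phiPH)\,\ga_\nu\big)=0$, which is exactly \refeq{taudivfree}. No computation beyond \refeq{gaugeinvconslaw} is required.

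\emph{Lorentz covariance.} Under $\bLa\in O(1,3)$ with spinorial representative $\bL$, the projected wave function transforms as $\phiPH(x)\mapsto\bL\,\phiPH(\bLa^{-1}x)\,\bL^{-1}$, and, by the defining property $\overline{\bL}=\bL^{-1}$ of the representation, its Dirac adjoint transforms as $\overline{\phiPH}\mapsto\bL\,\overline{\phiPH}\,\bL^{-1}$. Substituting these into \refeq{def:tau} and transporting one factor of $\bL$ around the trace by cyclicity leaves $\tfrac14\tr\!\big(\overline{\phiPH}\,(\bL^{-1}\ga_\mu\bL)\,\phiPH\,(\bL^{-1}\ga_\nu\bL)\big)$. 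The vector transformation law of the gamma matrices established in Section~\ref{sec:lorentzcov} turns each $\bL^{-1}\ga_\mu\bL$ into $\Lambda_\mu{}^\rho\ga_\rho$, whence $\tau'_{\mu\nu}=\Lambda_\mu{}^\rho\Lambda_\nu{}^\sigma\tau_{\rho\sigma}$ --- the transformation rule of a rank-two covariant Lorentz tensor.

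\emph{Reality.} Conjugating, $\overline{\tau_{\mu\nu}}=\tfrac14\tr\!\big((\overline{\phiPH}\ga_\mu\phiPH\ga_\nu)^\dagger\big)$; inserting the standard relations $\ga_\mu^\dagger=\ga^0\ga_\mu\ga^0$, $\phiPH^\dagger=\ga^0\overline{\phiPH}\ga^0$ and $\overline{\phiPH}{}^\dagger=\ga^0\phiPH\ga^0$, the interior factors of $\ga^0$ cancel in pairs and cyclicity of the trace returns $\tfrac14\tr(\overline{\phiPH}\ga_\mu\phiPH\ga_\nu)=\tau_{\mu\nu}$. Hence $\tau_{\mu\nu}\in\Rset$.

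\emph{Symmetry --- the main obstacle.} The adjoint and cyclicity manipulations above all return $\tau_{\mu\nu}$ to itself rather than producing $\tau_{\nu\mu}$, so symmetry does not follow from the Dirac-adjoint structure alone. I would prove it by passing to the scalar part $\tau_{\mu\nu}=(\overline{\phiPH}\ga_\mu\phiPH\ga_\nu)_S$ and using invariance of the trace under transposition together with a charge-conjugation matrix $C$ satisfying $C\,\ga_\mu^{T}C^{-1}=-\ga_\mu$: this reduces the claim $\tau_{\mu\nu}=\tau_{\nu\mu}$ to a transpose--symmetry (intertwining) property of $\phiPH$ of the schematic form $C\,\phiPH^{T}C^{-1}=\phiPH$. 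Such an identity fails for a generic Clifford number, so the real content of the symmetry is that it holds for our $\phiPH=\Pi\psiPH$ by virtue of the special structure of its diagonal blocks --- recall from \refeq{eq:insidepsi} that $\psi_\pm$ arise from \emph{real} two-forms --- and this is the one step where genuine work is needed. A self-contained alternative, which I would carry out if the intertwining route proves awkward, is to insert the explicit block forms \refeq{eq:insidepsi} into \refeq{def:tau} and compute the ten components $\tau_{\mu\nu}$ directly in the chosen Lorentz frame using the Pauli algebra; this simultaneously reconfirms reality and, as a by-product, exhibits the sign of $\tau_{00}$ needed for the probability-current construction to follow.
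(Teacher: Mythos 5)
Your proposal is correct, and on the one genuinely delicate point --- symmetry --- your instinct matches the paper's: symmetry of the Riesz tensor is \emph{not} automatic for a general Clifford number (e.g.\ for $\psi=\Id+\ga^0\ga^1$ one finds $\tau^{01}=-2\neq 2=\tau^{10}$), so it must come from the special block structure of $\phiPH=\Pi\psiPH$. The paper's own proof is terser than yours: it declares the tensor character evident from the definition, cites the divergence identity \refeq{gaugeinvconslaw} exactly as you do, and then establishes reality and symmetry \emph{simultaneously} by the explicit frame computation \refeq{tau00}--\refeq{taujk} in terms of $\beff_\pm$ --- i.e.\ precisely your fallback route. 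What you do differently: you get reality abstractly from the Dirac-adjoint/cyclicity manipulation (valid for \emph{any} Clifford number, as in Riesz's general setting), and you propose a charge-conjugation argument for symmetry. That route does work, with one sign correction: since the block-diagonal trace-free subspace is spanned over $\Cset$ by the bivectors $\ga^\mu\ga^\nu$, the intertwining identity is $C\phiPH^{\mathrm T}C^{-1}=-\phiPH$ (not $+\phiPH$), and likewise for $\overline{\phiPH}$; the two minus signs cancel in $\tr\big((\overline{\phiPH}\ga_\mu\phiPH\ga_\nu)^{\mathrm T}\big)$ and cyclicity then yields $\tau_{\nu\mu}$. Note also that the operative structural fact is this grade-two (bivector) content of $\cT_0$ --- a complex-linear statement --- rather than the reality of the two-forms $\bff_\pm$ which you invoke via \refeq{eq:insidepsi}. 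The trade-off is the usual one: your abstract arguments isolate \emph{why} each property holds and generalize beyond a fixed frame, while the paper's single component computation settles reality, symmetry, and (as you note) the sign of $\tau^{00}$ needed later, all in one stroke.
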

\begin{proof}
Tensor properties of $\boldsymbol{\tau}$ are evident from the definition, and the divergence-free property was established in the above. 
 To see that $\boldsymbol{\tau}$ is real and symmetric, we recall that, by choosing a Lorentz-orthonormal frame, one can find vectors 
$\beff_\pm \in \Cset^3$ such that for $\psiPH$ of the from \refeq{defpsiPH}, we have $\psi_+ = i\siV\cdot\beff_+$ and
 $\psi_- =i \siV'\cdot\beff_-^\ast = - i\siV\cdot\beff_-^\ast$. 
 We can then compute all of the components $\tau^{\mu\nu}$ in terms of $\beff_\pm$:
\bna \label{tau00}
\tau^{00} & = & \half\left( |\beff_+|^2 + |\beff_-|^2 \right) \\ \label{tau0j}
\tau^{0j} = \tau^{j0} & = & \frac{1}{2i} \left( \beff_+^\ast \times \beff_+ + \beff_-^\ast \times \beff_- \right)^j \\ \label{taujk}
\tau^{jk} & = & -\mbox{Re} \left( f_+^{j\ast} f_+^k + f_-^{j\ast}f_-^k\right)  + \half (|\beff_+|^2+|\beff_-|^2) \de^{jk},
\ena
which establishes the claim. 
\end{proof}
\begin{rem} \label{tauT}
\textit{
In fact $\boldsymbol{\tau}$ is the sum of two tensors $\boldsymbol{\tau}_\pm$, 
\beq 
\tau_+^{\mu\nu} = \frac{1}{4} \tr\left(\psi_+^\dag {\si'}^\mu \psi_+ \si^\nu\right),\qquad
\tau_-^{\mu\nu} = \frac{1}{4} \tr \left( \psi_-^\dag \si^\mu \psi_- {\si'}^\nu \right),
\eeq
each of which separately satisfies \refeq{taudivfree}.
Furthermore, recalling that $\psi_+ = \Si(\bff_+)$ and $\psi_- = \Si'(\bff_-)$, a computation shows that if we view $\bff_+$ and $\bff_-$ 
(formally) as two electromagnetic Faraday tensors, then the $\boldsymbol{\tau}_\pm$ are nothing but the (Maxwell-type) 
energy-momentum tensors of these two fields:
\beq 
\tau_\pm^{\mu\nu} = \TMmunu[\bff_\pm],
\eeq
where
\beq \label{def:EMenergytensor}
\TMmunu[\bff] := f^{\mu\la} f_{\nu\la} - \frac{1}{4}\eta^{\mu\nu}f_{\al\beta}f^{\al\beta}.
\eeq
 It is known that the electromagnetic energy-momentum tensor $\TMmunu$ satisfies the {\em dominant energy condition}, namely:
 If $X^\nu$ denote the space-time components of the vector field $X$, then
\begin{enumerate}
\item $(\TMmunu X_\nu)$ is causal and future-oriented whenever $X$ is causal and future-oriented.
\item $\TMmunu X_\mu X_\nu \geq 0$ whenever $X$ is causal and future-oriented.
\end{enumerate}
 It follows that the tensors $\boldsymbol{\tau}_\pm$, as well as their sum $\boldsymbol{\tau}$ enjoy the same property, 
and this is crucial for what ensues.
}
\end{rem}

\subsection{The quantum-mechanical probability current for the location of the photon}

 Armed with this non-Noether set of conservation laws, we are now in a position to identify our candidate for the photon probability current. 
 We note that $\tau^{\mu\nu}$ already addresses two of the three difficulties that the previously obtained conserved quantities had,
 namely, it is manifestly gauge invariant, and it enjoys positive-definiteness. 
 One  problem seemingly remains, though: $\tau$ is a rank-two tensor, and therefore in order to construct a Lorentz vector out of it, 
one has to contract it with another vector, which would have to be a Killing vector of the spacetime in order for the resulting 
four-vector to be a conserved current.
 The problem is that Minkowski space has infinitely many Killing vectors (every constant vector field is a Killing vector for the 
Minkowski metric), so to define a {\em unique} current one needs an objective principle which selects a distinguished Killing vector 
field.\footnote{\label{LabFrame}One might be tempted to use,  as a ``distinguished'' Killing vector,
the timelike unit vector of the ``Lab frame'' for which probabilities are computed.
 Indeed, Leopold, whose thesis \cite{Leopold} is based on unpublished notes by Norsen and Tumulka, 
and who thus uses as photon wave equation two independent sets of Maxwell equations pertinent to the two possible chiralities of the
electromagnetic fields, adds the pertinent two energy-momentum tensors in an ad-hoc manner, then contracts them with the timelike 
unit vector of the ``Lab frame.''
 However, there are compelling reasons for why a quantum probability current for a particle should only depend on the wave function 
of the particle, not on some external ``observer.''
 For a critique of theories with observer-dependent probability currents, see Struyve et al. \cite{WardETal}, and 
Tumulka \cite{RodiUnromantic} (and references therein.)}  
 Fortunately, the tensor $\boldsymbol{\tau}$ is associated with its own unique timelike Killing vector field which 
depends only on $\phiPH$. 
 This allows us to construct a distinguished conserved probability current for the photon. 
 
 We anticipate that our probability current for the photon is compatible with {\em Born's rule} for quantum probabilities. 
 To see this one needs a Hilbert space formulation, which we will supply in the last of the main sections. 

 \subsubsection{The timelike Killing field generated by $\phiPH$}

 Let $(x^\mu)$ be a global rectangular coordinate system on the configuration space, which is taken to be a copy of Minkowski space,
and let the Vierbein $\cV := \{\bu_{(\mu)}\}_{\mu = 0}^3$ denote the corresponding global Lorentz-orthonormal frame of constant vector fields 
$\bu_{(\mu)} = \frac{\p}{\p x^\mu}$, with $\bu_{(0)}$ timelike, and the other three spacelike, unit vectors. 
 On Minkowski space, every constant vector field is a Killing field, therefore each one of the $\bu_{(\mu)}$ 
vector fields generates a conserved quantity, as follows:  
 Let 
\beq
\bR_{(\mu)} := \btau \bu_{(\mu)},\qquad \mu = 0,\dots,3
\eeq
(thinking of $\btau$ as a (1,1) tensor, i.e. a linear transformation.) 
Then \refeq{taudivfree} implies that
\beq\label{conserved}
\p_\nu R_{(\mu)}^\nu = \p_\nu (\tau^{\nu}_\la \bu_{(\mu)}^\la) = 0.
\eeq
 Thus we obtain four conserved currents $\{ \bR_{(\mu)}\}_{\mu =0}^3$.  
 Let $\Si_t$ denote the spacelike hypersurface $\{ x^0 = ct\}$.  
 Integrating \refeq{conserved} on $\Si_t$ and applying the divergence theorem, we obtain that, so long as $\phiPH$ is sufficently 
rapidly decaying at spatial infinity, we have 
\beq
\p_t \int_{\Si_t} R_{(\mu)}^0 d^3x = 0.
\eeq
 We thus obtain four constants
\beq\label{def:pimus}
\pi_{(\mu)} := \int_{\Si_0} \tau^{0\mu} d^3 x,\qquad \mu = 0,\dots,3.
\eeq
 Note that the components of $\btau$ are homogeneous of degree two in $\psiPH$, therefore so are the constants $\pi_{(\mu)}$. 
 In particular, using \refeq{tau00} and \refeq{tau0j} we have
\bna\label{piNULL}
\pi_{(0)} & =& \int_{\Si_0} \half\left( |\be_+|^2 + |\bb_+|^2 + |\be_-|^2 + |\bb_-|^2\right) d^3x\\
\pi_{(j)} & = & \int_{\Si_0} \left(\be_+\times \bb_+ + \be_-\times\bb_- \right)^j d^3x.
\label{piJAY}
\ena
Furthermore, as \refeq{def:pimus} makes it clear, these quantities are determined by the {\em initial data} for the wave function, i.e. the values that $\phiPH$ takes on the initial hypersurface $\Si_0$.  

 Let $\bpi$ be the vector whose components in the $\cV$ frame are the $\pi_{(\mu)}$:
\beq
\bpi := \sum_{\mu=0}^3 \pi_{(\mu)} \bu_{(\mu)}.
\eeq
  For finite right-hand sides in \refeq{piNULL}, \refeq{piJAY} 
$\bpi$ is null only if $\be_+\cdot\bb_+=0=\be_-\cdot\bb_-$ almost everywhere; otherwise $\bpi$ is future-directed timelike. 
 Thus $\bpi$ is always future-directed and causal:
\beq
\pi^0 \geq \surd{ \textstyle{\sum\limits_{j=1}^3} (\pi^j)^2}.
\eeq
 Since the necessary condition for $\bpi$ being null is not generically fulfilled, it now follows that $\bpi$  is future-directed and \emph{typically timelike}.\footnote{By
   restricting the initial data for the wave function to typical data, we can ensure that $\bpi$ is always strictly timelike.}

 Suppose that $\cV' := \{\bu'_{(\mu)}\}_{\mu=0}^3$ were another Lorentz-orthonormal frame for the Minkowski configuration spacetime. 
 There must be a Lorentz transformation $\bLa \in O(1,3)$ such that 
\beq
\bu'_{(\mu)} = \bLa \bu_{(\mu)},\qquad \mu = 0,\dots,3.
\eeq
Given that the components of $\btau$ with respect to a given frame transform by similarity: 
$[\btau]_{\cV'} = \bLa [\btau]_{\cV} \bLa^{-1}$ it is easy to see that the components of the vector $\bpi$ transform like those of a 
Lorentz 4-vector, i.e. 
\beq 
[\bpi]_{\cV'} = \bLa [\bpi]_{\cV}.
\eeq
Thus $\bpi$ is a constant vector field on the Minkowski space, and therefore it is a distinguished Killing field of the configuration 
Minkowski space that can be constructed entirely out of the photon wave function $\phiPH$.  
\subsubsection{The probability current}
 We have seen that for finite right-hand sides in \refeq{piNULL}, \refeq{piJAY} the vector $\bpi$ is 
future-oriented and \emph{typically timelike}.
 Thus, except at most for untypical situations, we may assume that the causal vector $\bpi$ is not null,\footnote{If 
   $\bpi$ is null, we can still define a current by setting $\bX = \bpi$. 
 In this case the current $\bj$ will also be null, so the probability density is no longer defined, but one can still have particle 
   trajectories by insisting that the 4-velocity of the particle is parallel to $\bj$ (cf. \cite{BerndlETal}).}
i.e. 
\beq
\boldsymbol{\eta} (\bpi,\bpi) = \eta_{\mu\nu}\pi^\mu \pi^\nu = 
 (\pi^0)^2 - \sum_{i=1}^3 (\pi^i)^2  =: |\bpi|^2 >0.
\eeq
 Set
\beq
\bX := \frac{\bpi}{|\bpi|^2}.
\eeq
 We will use $\bX$ to construct the probability current for a single photon. 
 Let
\beq \label{def:j}
\bMj := \btau \bX.
\eeq
 That is,
\beq
j^0 = \tau^0_\nu X^\nu = \frac{\pi^0 \tau^0_0 + \pi^i \tau^0_i}{|\bpi|^2},\qquad
j^i = \tau^i_\nu X^\nu = \frac{\pi^0 \tau^i_0 + \pi^k \tau^i_k}{|\bpi|^2}.
\eeq
 We propose the four-co-vector field $\bMj^\flat$ to be the \emph{quantum-mechanical probability current for a photon}.\footnote{After 
   learning about our proposal, Ward Struyve informed us (private communication, May 2018) that in his thesis \cite{StruyvePHD}, p.36, 
   he pointed out that the contraction of the energy-momentum-stress tensor of Harish-Chandra's photon wave function with the pertinent conserved
   energy-momentum four-vector produces, after normalization, an observer frame-independent conserved current in Harish-Chandra's formalism, 
   which transforms like a probability current under Lorentz transformations.
   However, while the time component of this current is indeed non-negative, Struyve argues that this current is rather describing the 
   ``energy flow lines'' of a solution to Harish-Chandra's photon wave equation, not a probability current for the position of a photon. 
    Our proposal, while superficially similar, does not encounter such conceptual difficulties because the Riesz tensor is \emph{not} the 
    energy-momentum-stress tensor of our photon wave function.\label{fn:Ward}}
 We note that, since the timelike four-vector $\bX$ is a Killing vector, this is a conserved current, viz.
\beq \label{eq:Jconserv}
\p_\mu j^\mu = 0.
\eeq

 Given a real orthonormal Lorentz frame $\{\bu_{(\mu)}\}_{\mu = 0}^3$, with $\bu_{(0)}=\p_{ct}$ timelike, with
$t$ denoting the time coordinate on $\cM$ in this frame, we now define 
\beq \label{rhoANDjV}
\rho c:= j^0 \geq 0,\qquad \jV := (j^1,j^2,j^3)^{\mathrm{T}}.
\eeq
In this space\ \&\ time splitting, the continuity equation \refeq{eq:Jconserv} for the current $\bMj$ then reads 
\beq \label{eq:PROBcontinuityEQ}
\p_t \rho + \nab \cdot \jV = 0.
\eeq
 Let $\Si_t$ denote the (spacelike) $t$-level-hypersurfaces, and let $\sV$ denote the space vectors in $\Si_t$. 
  The total integral over $\Si_t$ of $\rho(t,\sV)$ is equal to one, which allows it to be viewed as a probability density
for finding a photon at $\sV\in\Si_t$.
  The three-vector field $\jV(t,\sV)$, then, is the corresponding probability current vector-density. 

\begin{rem}
\textit{With view toward the question whether a de Broglie--Bohm (dBB)-type quantum theory for the photon is feasible,
we note that $j^0 \geq |j^i|$, so wherever $j^0\neq 0$ we may define a subluminal three-velocity vector field $\vV(t,\sV)$ componentwise by
\beq 
 v^i := \frac{j^i}{j^0}.
\eeq
  In this space\ \&\ time splitting, the continuity equation \refeq{eq:Jconserv} for the current $\bMj$ reads:  
\beq \label{eq:PROBcontinuityEQsplit}
\p_t \rho + \nab \cdot ( \rho \vV) = 0.
\eeq
  The space vector field $\vV$ can then be given the meaning of a {\em guiding velocity field} for the position of a photon.
  Note that in this interpretation one {\em must} abandon the notion that $\rho$ is fundamentally a probability density --- 
  it {\em only serves in this role for all practical purposes}.
 Note also that a photon would move at subluminal speeds in this dBB-type theory; however, an essentially ``freely moving'' photon
would eventually move at a speed arbitrarily close to the speed of light because the wave function will locally become a plane
wave.}
\end{rem}
%
\section{Hilbert space formalism: Photon Hamiltonian and Born's rule}\label{sec:HilbertBorn}\vspace{-5pt}
%
\subsection{The photon wave equation as a Schr\"odinger equation}\label{sec:schrodA}\vspace{-5pt}
 The obvious procedure to arrive at the Schr\"odinger -- or Hamiltonian -- form of the Dirac-type equation
\refeq{eq:DiracEQphotonAGAIN} would seem to consist of choosing a Lorentz frame, equivalently a space-and-time splitting, 
multiplying \refeq{eq:DiracEQphotonAGAIN} by $\ga^0$, and moving all terms other than the one involving the time 
derivative of $\psiPH$ to the right-hand-side. 
 This would yield
\beq\label{eq:SCHROEDfull}
i\hbar \p_{t} \psiPH = 
\tilde{H}\psiPH,
\eeq
with Dirac-type operator
\beq\label{eq:Hamfull}
\tilde{H} := -i\hbar c \ga^0 \ga^k \p_k + \mPH c \ga^0 \Pi.
\eeq
 However, taking this route one immediately encounters an obstacle: 
 It is not clear what the Hilbert space structure is, with respect to which 
$\tilde{H}$
would be at least formally self-adjoint. 
 The first term in 
$\tilde{H}$ suggests that the relevant quadratic form is $ \tr( \psiPH^\dag \psiPH )$, 
but this expression mixes the $+$ components in $\psiPH$ with the $-$ components, and the gauge-invariant quantities with gauge-dependent ones. 
This and other conceptual difficulties make it hard to make quick progress in this direction.\vspace{-5pt}

%
\subsection{The diagonal part of the photon wave equation as a Schr\"odinger equation}\label{sec:schrodB}\vspace{-5pt}
 %
Fortunately, the conserved probability current for our relativistic photon wave equation \refeq{eq:DiracEQphotonAGAIN}, 
which we identified with the help of the M. Riesz tensor, also suggests a Hamiltonian and a Hilbert space 
structure to work with.
 Namely, the Riesz tensor is constructed from the diagonal part $\Pi\psiPH =: \phiPH$ of $\psiPH$, which
as we noted earlier, satisfies the massless Dirac equation, i.e. \refeq{eq:DIRACmZEROeqn}.
 In space \&\ time splitting the mass-less Dirac equation \refeq{eq:DIRACmZEROeqn} for $\phiPH$ then
takes the Schr\"odinger-type format
\beq 
\label{DIRACeqSCHROEDINGERformat}
i\hbar \p_t \phiPH = H\phiPH,
\eeq
with formal Hamiltonian 
\beq \label{ham}
H 
:=  -i\hbar c \al^k \p_k,
\eeq
where $\al^k := \ga^0\ga^k$.
 It is this Schr\"odinger equation, \refeq{DIRACeqSCHROEDINGERformat}, 
and Hamiltonian, \refeq{ham}, which allows us to incorporate the usual quantum-mechanical formalism.
\begin{rem}
\textit{There is no corresponding Schr\"odinger-type equation for $(\Id - \Pi)\psiPH$, because the off-diagonal components of $\psiPH$ 
satisfy the non-autonomous equations \refeq{DIRACeqSPLIToffDIAG}.}
\end{rem}

\subsubsection{The diagonal Hamiltonian and the diagonal  Hilbert space structure}
 Let $\cT_0$ be the sub-bundle of  $\cT$ consisting of {\em block-diagonal} rank-two bi-spinors, i.e. those of the form
\beq
\phiPH = \left(\begin{array}{cc} \psi_+ & 0 \\ 0 & \psi_- \end{array}\right),\qquad \tr \psi_+ = \tr \psi_- = 0,
\eeq
defined on the 1-particle configuration (Minkowski) spacetime $\cM$. 
 Let $(x^\mu)_{\mu=0}^3$ be a rectangular coordinate system on $\cM$, and let $\Si_t$ denote the spacelike hyperplane $x^0 = ct$ 
in $\cM$, which is isometric to $\Rset^3$ with the Euclidean metric.
 For $\phi \in \cT_0$ we denote by $\phi_t$ the restriction of $\phi$ to $\Si_t$, thought of as a bispinor field on $\Rset^3$.
   For $\phi,\ze \in \cT_0$ we define their inner product at time $t$ to be
\beq\label{def:innerprod}
\langle \phi_t | \ze_t \rangle := \int_{\Si_t} \tr\left( \phi_t^\dag \ze_t \right) d^3x,
\eeq
which is easily seen to satisfy the requirements of being a complex-valued inner product. 
 We define the Hilbert space $\cH$ to be the completion of smooth, compactly supported bi-spinor fields defined on $\Rset^3$, 
with values in $\cT_0$, under the $L^2$ norm induced by this inner product, viz.
\beq
\|\phi_t\|_{L^2}^2 : = \langle\phi_t | \phi_t\rangle = \int_{\Rset^3} \tr\left( \phi_t^\dag\phi_t \right) d^3x.
\eeq

 The diagonal Hamiltonian $H = -i\hbar c\al^k\p_k$, written more explicitly as
\beq 
\label{DIRACphotonHAMILTONIAN}
H = -i\hbar c\begin{pmatrix} \siV\cdot\nab  & 0 \\
                                        0 & - \siV\cdot\nab
\end{pmatrix},
\eeq
is easily seen to be symmetric (i.e. formally self-adjoint) with respect to this Hilbert space inner product, i.e.
\beq 
\langle \phi | H\zeta \rangle = \langle H\phi| \zeta \rangle,
\eeq
for bi-spinors $\phi$ and $\zeta$ in $C^\infty_c(\Rset^3;\cT_0)$.
 Furthermore, by following verbatim the well-known proof --- using Fourier 
transform --- for the essential self-adjointness of the electronic free Dirac Hamiltonian (see, e.g., Thaller \cite{ThallerBOOK}, \S1.4.4),
our diagonal $H$ is easily seen to be essentially self-adjoint.

\subsubsection{The Einstein relations}\vspace{-5pt}

 The spectrum of the self-adjoint Hamiltonian $H$ given in \refeq{DIRACphotonHAMILTONIAN} is ${\mathrm{spec}\,}H=\Rset$.
 Writing the generalized eigenvalues as
\beq 
\label{HAMILTONIANspec}
E=\pm \hbar c|\kV|, \qquad \kV\in\Rset^3,
\eeq
and recalling dispersion relation \refeq{dispREL}, we recover Einstein's energy-frequency relation for photons,\footnote{Einstein 
  in his first statistical analysis of Planck's 
  black body law wrote  the equivalent formula $E=R\beta\nu/N$ (p.143 of \cite{EinsteinPHOTONa}), with $R$ the ideal gas constant and 
  $N$ Avogadro's number; setting $\beta=h/k_{\mbox{\tiny{B}}}$, where $k_{\mbox{\tiny{B}}}$ is Boltzmann's constant, this is formally 
  identical with Planck's relation $E=h\nu$ for the energy unit that can be absorbed from / emitted into an 
  electromagnetic wave of frequency $\nu$ by ``atomic oscillators'' in the material walls of a black body radiator.
    In his commentary on the works of Lorentz, Jeans, and Ritz (\cite{EinsteinPHOTONb}, p.191) and his 
  subsequent statistical analysis of Planck's black body formula (\cite{EinsteinPHOTONbb}, p.497) Einstein used $E=h\nu$. 
    Incidentally, to the best of our knowledge, Planck never accepted Einstein's ``Lichtquantenhypothese,'' the ``hypothesis of 
  [the existence of] light quanta'' (a.k.a. photons).
    Thus, \emph{in the context of the theory of photons} it seems appropriate to refer to $E=\hbar\omega$ as 
 ``Einstein's energy-frequency relation of a photon,'' and not as a ``Planck(--Einstein) relation.''}
\beq 
\label{EinsteinEomega}
E=\hbar \omega,
\eeq
here with positive and negative frequencies $\omega\in\Rset$.
 Since the momentum operator $-i\hbar\nab$ commutes with $H$, we also find the generalized momentum eigenvalue vectors to
satisfy Einstein's momentum - wave vector relation for a photon,\footnote{Cf. \cite{EinsteinPHOTONbb}, p.497, 
and \cite{EinsteinPHOTONc},  p.61; there he writes $h\nu/c$ for the magnitude of momentum transferred between a photon and 
 a molecule upon the photon's directed emission / absorption. 
   This relation is formally identical with de Broglie's formula
 $\hbar\kV =\pV$ for the wave vector of a phase wave associated with a massive electromagnetic particle such as the electron 
 having  momentum $\pV$.
   For this reason this momentum-wave vector relation is often referred to as the ``de Broglie relation.''
   However, to the best of our knowledge, de Broglie proposed this relation  in 1924, for massive particles. 
   Thus, \emph{in the context of a theory of photons} we find it appropriate to refer to ``$\pV=\hbar\kV$'' 
 as ``Einstein's momentum-wave vector relation for the photon.''
   Also Compton in his 1923 electron--X-ray scattering theory  \cite{Compton} 
 used $p=h\nu/c$ for the magnitude of the momentum of photons as if it was a generally known fact.}
\beq 
\label{EinsteinPk}
\pV=\hbar\kV.
\eeq

\subsection{Born's rule for the quantum probabilities}\label{sec:Bornrule}\vspace{-5pt}

Given $\psiPH$, the photon probability current \refeq{def:j} 
produces the probabilitistic quantum formalism usually summarized as ``Born's rule,'' so long as the vector $\bpi$ is not null.
 Explicitly, pick any orthonormal Lorentz frame $\{\bu_{(\mu)}\}_{\mu = 0}^3$, with $\bu_{(0)}=\p_{ct}$ timelike, 
and express $\bMj$ in terms of $\psiPH$, recalling \refeq{def:tau}.
 This yields a bilinear expression for $\bMj$ in terms of the normalized $\phiPH$. 
  In particular,
\beq
\rho  =  \frac{\pi^0}{|\bpi|^2} \frac{|\be_+|^2 + |\bb_+|^2 + |\be_-|^2 + |\bb_-|^2}{2}  -
 \sum_{i=1}^3 \frac{\pi^i}{|\bpi|^2} \left( \be_+\times \bb_+ + \be_-\times\bb_- \right)^i
\eeq
and thus, for all $t$,
\beq
\int_{\Si_t} \rho\  d^3x = 1.
\eeq
 If the vector $\bX$ is not null, one can always find a Lorentz transformation $\bLa$ such that 
\beq
\bLa \bX = \left(1, 0, 0, 0\right)^{\mathrm{T}}.
\eeq 
 In this co-moving frame the expression for the probability density simplifies to
\beq
\rho = \frac{1}{\pi^0} \frac{|\be_+|^2 + |\bb_+|^2 + |\be_-|^2 + |\bb_-|^2}{2}
 =
 \frac{ \tr\left ({\phi^\dagger_{\mbox{\rm\tiny ph}}}\phiPH\right)}{\int_{\Si_0} \tr\left ({\phi^\dagger_{\mbox{\rm\tiny ph}}}\phiPH\right) d^3x} =
 \frac{\tr\left ({\phi^\dagger_{\mbox{\rm\tiny ph}}}\phiPH\right)}{\langle \phiPH | \phiPH \rangle}.
\eeq
 This is manifestly compatible with Born's rule.
 More to the point, for normalized diagonal $\phiPH$, viz. when choosing the conserved quantity $\langle\phiPH | \phiPH\rangle=1$, 
this becomes $\rho = |\Pi\psiPH|^2$.\vspace{-10pt}

\section{Summary and Outlook}\label{sec:conclusions}\vspace{-5pt}

  Beginning with Majorana \cite{MajoranaUnpublished} and Oppenheimer \cite{OppiPHOTON} the problem of finding a
Lorentz-covariant \emph{quantum-mechanical} wave equation for a single photon has occupied many minds; see e.g. Kobe \cite{Kob1999} 
and references therein.
 In particular, several authors, e.g. \cite{OppiPHOTON, Good1957, SachsSchwebel, MosesQM, BiBiTHREE, TamburiniVicino, Mohr} have looked 
for a wave equation of the photon structurally similar to the Dirac equation for the electron; see also
{\S}12 in \cite{IBBphotonREV}.
 However, none of these proposed ``photon wave equations'' derive from a Lorentz-scalar action principle, and none
leads to an acceptable probability current for finding the photon at a particular point in space  which realizes the
usual $L^2$ Hilbert space ``Born's rule'' in a space-plus-time split of spacetime.
 It has even been questioned whether a photon analog of Dirac's wave equation for an electron exists. 

 In the main part of this paper we have shown that the Dirac-type equation \refeq{eq:DiracEQphoton} (i.e. \refeq{eq:DiracEQphotonAGAIN}) for 
the trace-free rank-two bi-spinor field $\psiPH$ is a viable candidate for the quantum-mechanical wave equation of the photon.
 We note that we operate with the maximal number of components of a photon wave function as required by representation theory,
in contrast to all previous proposals which we are aware of.
 We believe that our proposal, in its entirety, is new and that it overcomes all the obstacles encountered in the previous proposals
by other authors. 

 In particular, our photon wave equation \refeq{eq:DiracEQphoton} derives from a Lorentz-scalar Lagrangian, as we have shown 
in subsection \ref{sec:action}, and it furnishes a conserved probability current for ``finding the photon at a particular location,'' 
based on the non-Noetherian conservation laws associated with the M. Riesz tensor, which transforms properly under the full Lorentz group.

 We remark that our special-relativistic photon wave equation, furnishing a distinguished probability current four-vector density as it
does, could seem to be in violation of the so-called\footnote{As Weinberg and Witten point
  out, the part of the theorem which concerns conserved four-vector current densities, is due to Sidney Coleman.}
 Weinberg--Witten theorem \cite{WeinbergWitten}.
 However, the quantum-field theoretical assumptions made in that theorem do not cover the quantum-mechanical formulation used in 
the present paper. 

 Our probability current in turn suggests the usual $L^2$ Hilbert space structure be used for the Hamiltonian obtained from the
autonomous diagonal part of our photon wave equation, viz. from the
\emph{massless Dirac equation $\ga^\mu p_\mu \phiPH = 0$ for the block-diagonal rank-two bi-spinor wave function} $\phiPH:= \Pi\psiPH$,
rewritten as a Schr\"odinger equation.
 The Hamiltonian has all the features expected of a photon Hamiltonian by relativistic quantum mechanics.
 Most importantly, the four-vector current density that satisfies the continuity equation has a non-negative time component,
which in a co-moving frame (whenever such exists)\footnote{Since, as we explained,
  the probability current four vector is \emph{typically timelike}, not null, by restricting the initial data for the 
  wave function to be ``typical'' we can always ensure that  the probability current four-vector is \emph{timelike}.}
becomes $\frac{\tr({\phi^\dagger_{\mbox{\rm\tiny ph}}} \phiPH)}{\int \tr({\phi^\dagger_{\mbox{\rm\tiny ph}}} \phiPH) d^3x}$, 
with $\int \tr({\phi^\dagger_{\mbox{\rm\tiny ph}}} \phiPH) d^3x$ conserved in time. 
 For normalized wave functions this reads $|\Pi\psiPH|^2$ (where the trace is understood as part of the absolute bar operation).
 Thus it qualifies for the Born-rule of the probability density of finding a photon at a particular point in space,
\emph{with respect to the usual Lebesgue measure}.

 It should be noted that, even though the off-diagonal blocks of $\psiPH$ do not enter our photon particle current, 
they play an integral role in the theory, since without them we would not have been able to formulate a theory that 
is derivable from a relativistically scalar Lagrangian, which would have prevented us from obtaining Noetherian 
conservation laws of energy, momentum, etc. for our photon wave equation. 
 Thus we could not have taken the photon wave function to consist of only the diagonal blocks. 

 Interestingly enough, the coupling between the diagonal and the off-diagonal blocks of the photon wave function
in our Dirac-type equation \refeq{eq:DiracEQphotonAGAIN} introduces a yet-to-be-determined mass parameter, $\mPH$, 
even though any solution $\psiPH$ of \refeq{eq:DiracEQphotonAGAIN} satisfies the massless Klein--Gordon equation. 
 The parameter features prominently in the law of energy conservation of $\psiPH$, see Section \ref{sec:energymom}. 
 Thus we expect that it will be determined through an interacting theory of photons and electrons in which the
off-diagonal blocks should play an important role.

 In this vein, we plan next to take steps in the direction of founding a genuine $N$-body quantum mechanics of 
photons and electrons.
 In particular, we hope to set up a fully-covariant two-body problem involving an electron and a photon, and to study
the Compton scattering process, as well as the emission / absorption of photons by Hydrogen (with a proton treated in 
Born--Oppenheimer approximation as given).
 As a preliminary step, jointly with Matthias Lienert we have already begun a study of the Compton scattering process in 
one space dimension,\footnote{While Maxwell's electromagnetic field equations in 1 space dimension yield only
  trivial solutions, our photon wave equation has non-trivial solutions, indeed.
  This fact alone makes it plain that our photon wave equation is not equivalent to any of the previous proposals which are 
   equivalent to Maxwell's field equations.}
using the multi-time wave function formalism.
  These results will be reported elsewhere.

 All our efforts are meant to shed a rigorous light on (the early) Einstein's idea that ``a quantum of light is being guided by a 
ghost field.''\footnote{Born \cite{BornsPSISQUAREpapersA,BornsPSISQUAREpapersB} reports that Einstein spoke of the guiding
 field as a ``Gespensterfeld'' because, unlike Maxwell's electromagnetic fields, a guiding field would not in itself carry 
 physical energy or momentum, which in turn Einstein thought of as being localized in the light quanta (photons).} 
 Wigner \cite{WignerCollectedWorks} reports that Einstein never figured out how to make his idea into a theory because he 
assumed that each particle's guiding field satisfies its own \emph{independent} ``ghost field equations.'' 
 Wigner further states that this road block was overcome, in the non-relativistic realm, by the realization that the 
$N$-particle configuration is guided by the solution to Schr\"odinger's wave equation on configuration space --- 
an idea due to\footnote{\label{fn:BdBB}How exactly the guiding is being done didn't concern Born, yet he stated that he 
was convinced that the guiding cannot be deterministic; a well-known non-deterministic guiding equation was later supplied 
through Nelson's stochastic mechanics \cite{NelsonA,NelsonB}.
  By contrast, the de Broglie--Bohm guiding law is deterministic, and in a sense the determinisitc limit of Nelson's stochastic 
formalism.}
Born \cite{BornsPSISQUAREpapersA,BornsPSISQUAREpapersB}, de Broglie \cite{deBroglieSOLVAY}, and later Bohm \cite{Bohm52}, 
which Wigner mistakenly attributes to Schr\"odinger. 
 Inspired by the de Broglie--Bohm theory of non-relativistic particle motion as developed in \cite{DGZ}, we expect 
that Einstein's ``ghost field for a photon'' is the conditional wave function of the photon, obtained from
the joint $N$-body wave function of all electrons and photons, where the conditioning is on the actual positions of all the other particles.
 
 In this paper we have only explored the conservation laws for the full photon wave equation \refeq{eq:DiracEQphoton}, which
  follows from an action principle, using Noether's symmetry techniques; and those for the autonomous diagonal part of the 
equation, namely \refeq{eq:DIRACmZEROeqn}, which follow from M. Riesz non-Noether symmetry techniques. 
For the latter equation, many more conservation laws exist.
  Indeed, as already observed by Simulik \cite{Simulik}, any one of the domain transformations generated by the conformal 
Killing vector fields of the Minkowski space $\Rset^{1,3}$ (which form a 15-dimensional Lie group) can be applied concurrently 
with any one of the 8 ``vertical" transformations 
(i.e. along the fiber) that act on a $\phiPH$ satisfying \refeq{eq:DIRACmZEROeqn},
leading to the existence of 128 conserved quantities (although 62 of these turn out to be trivial.)
 Simulik showed that these quantities are in one-to-one correspondence with the conserved quantities for the Maxwell 
system\footnote{In Simulik's work, the 
 Dirac equation \refeq{eq:DIRACmZEROeqn} is not considered as diagonal part of a photon wave equation, but as equivalent to
Maxwell's equations for the classical electromagnetic field in vacuum; see also \cite{Moses1957,Moses1959}
 As to conservation laws for Maxwell's equations, see also \cite{BesselHagen}, \cite{Fushchych}, \cite{AncoPohj}, \cite{AncoThe}.\vspace{-10pt}}
 \refeq{MaxEB} studied previously by Lipkin \cite{Lipkin}, who found `nothing but ``zilch'' (as he called it!),' a complex
$4\times4\times4$ spacetime tensor.
 Some of these additional conserved quantities may turn out to be of relevance to photons as well, and we hope to explore that in the future. 
\bigskip

\noindent
{\bf Acknowledgements:} We thank Shelly Goldstein, Herbert Spohn, and especially Rodi Tumulka for illuminating discussions.  
  After submission of our paper we received many valuable comments on the circulated preprint, for which we thank:
 Olivier Darrigol for his commentary concerning the ``historical origin of $\bE+i\bB$'' (see Appendix \ref{app:others});
 Michel Jansen for pointing us to Darrigol's historical studies; 
 Max Lein and Gian-Michele Graf for their remarks about the works, respectively, of Wigner \cite{Wig1939} and Bargmann--Wigner 
\cite{BarWig1948}, which prompted us to insert clarifying comments regarding representation-theoretical approaches to the photon 
wave equation; Nikolai Leopold for sending us his master thesis \cite{Leopold}, see footnote \ref{LabFrame};
 Ward Struyve for sending us his Ph.D. thesis \cite{StruyvePHD}, see footnotes \ref{fn:mzapp} and \ref{fn:Ward}; 
 Friedrich Hehl for pointing out reference \cite{Cor1953}. 
 We also thank the referee for the helpful suggestions. 
\newpage

\section*{Appendix}\vspace{-5pt}
\appendix

 In the main part of our paper we hope to have put to rest the misperception that a quantum-mechanical wave equation for
a single photon is an impossibility.
 In Appendix \ref{app:others} we comment on some other widespread misperceptions in the opposite direction:
 that the quantum-mechanical photon wave equation is already given by Maxwell's equations for the classical electromagnetic field
--- in disguise ---, and that this could be easily reverse-engineered from QED. 
 In Appendix \ref{app:primer} we provide a brief recap of the algebra of spinors and bi-spinors of ranks one and two, 
 for the convenience of the reader. \vspace{-10pt}

\section{Maxwell's equations, QED, and all that}\label{app:others}

\subsection{Maxwell's equations in complex form: a ``Mock Photon Wave Equation''}\vspace{-5pt}

 Maxwell's equations for the electromagnetic vacuum fields comprise the evolution equations
\begin{alignat}{1}
\textstyle
\pddt{\bB(t,\sV)}
&= \label{eq:MLdotB}
        -c\nab\times\bE(t,\sV) \, ,
\\
\textstyle
         \pddt{\bE(t,\sV)}
&= \label{eq:MLdotE}
        +c\nab\times\bB(t,\sV)  ,
\end{alignat}
and the constraint equations
\begin{alignat}{1}
\textstyle
        \nab\cdot \bB(t,\sV)  
&= \label{eq:MLdivB}
        0\, ,
\\
        \nab\cdot\bE(t,\sV)  
&=\label{eq:MLdivE}
       0\,;
\end{alignat}
here, 
$\sV\in\Rset^3$ and $t\in\Rset$ refer to points in ``space'' and ``time'' as per choice of a Lorentz frame of physical Minkowski 
spacetime, and similarly $\bE$ and $\bB$ are the electric field and the magnetic induction field defined w.r.t. this 
Lorentz frame; see \cite{JacksonBOOK}.

 The two equations (\ref{eq:MLdotB}) and (\ref{eq:MLdotE}) can be merged into a single evolution equation 
for the complex vector field $\bF := \bE +i\bB$,
\begin{equation}
\textstyle
 i \pddt{\bF(t,\sV)}
= \label{eq:MLdotF}
        c\nab\times\bF(t,\sV) \, ,
\end{equation}
and the constraint equations can be merged into
\begin{equation}
\textstyle
        \nab\cdot \bF(t,\sV)  
= \label{eq:MLdivF}
        0\, .
\end{equation}
 This rewriting of Maxwell's equations was presumably known in the late 19th century.\footnote{The complex 
   vector $\bF = \bE +i\bB$ features in ${\S138}$, eq.(2) on p.348, of H. Weber's book \cite{Weber} on ``partial differential 
   equations of mathematical physics based on Riemann's lecture notes.'' 
  Silberstein in 1907 \cite{SilberA,SilberB}, citing Weber's book as source, refers to Riemann as discoverer of this formula.
  Eventually the Bia{\l}ynicki-Birulas \cite{IBBZBBphotonREV} sanctioned $\bF$ the Riemann--Silberstein vector.
  However, Weber subjected Riemann's lecture notes to extensive editing and updating! 
  A 1901 review (by ``G.H.B.'') in Nature \cite{NatureBOOKreview} of Weber's book states 
    ``THE lectures, delivered at the University of G\"ottingen by Prof. Bernard Riemann in the sessions of 1854-55, of 
        1860-61, and in the summer of 1862, have thanks to the volume brought out after Riemann's death under the editorship 
        of Karl Hattendorff, long ranked among the mathematical classics. 
      The third and last edition of ``Partielle Differentialgleichungen'' appeared in 1882, and two years ago Prof. Heinrich 
        Weber was entrusted with the task of bringing out a fourth edition. 
      There were three possible ways in which this task could have been fulfilled. 
      One way was to re-publish the edition of 1882, with trifling additions and alterations. 
      The second way was to retain the existing text, but to add copious notes together with references to recent developments 
        bordering on the subject of Riemann’s lectures. 
      The third way was to write an entirely new book, based, indeed, on the earlier editions, but completely brought up to date 
        by the embodiment of the new methods and problems that have come into existence in connection with discoveries in mathematics 
        and physics extending over nearly twenty years from the date of the last edition, and nearly forty years from the time when 
        the lectures were given by Riemann. 
      Prof. Weber has adopted the last of these alternatives, and by so doing has produced a treatise which will be invaluable to the 
        modern mathematical physicist.''
  Since Maxwell's dynamical theory of the electromagnetic field was published in 1865, Maxwell's equations 
    were certainly not covered by Riemann in his lectures 3+ years before this event. 
  Of course, it is not impossible that Riemann in the last year of his life worked on Maxwell's equations and 
    that he did note the possibility of merging the evolution equations 
    for $\bE$ and $\bB$ into a single complex equation for $\bF = \bE +i\bB$,
 but this is pure speculation.
  To us it seems much more likely that Weber, a skillful mathematician in his own right, introduced $\bF = \bE +i\bB$ himself, 
    presumably sometime in the late 19th century already.
  Thus, ``Weber vector'' would seem a more appropriate name than ``Riemann--Silberstein vector.''
  Perhaps a historian of science will be able to clarify this issue.
 (Note added: We are very grateful to Olivier Darrigol for informing us in correspondences on Jan. 30 \&\ 31, 2018 that, 
  to the best of his knowledge, the earliest known occurrence of the complex field vector $\bF = \bE +i\bB$ is in Heinrich Weber's text
 of 1901, and that Weber does not refer to any earlier source.)}

 It could of course not have been before the advent in 1926 of Schr\"odinger's wave equation, for
anybody to notice that (\ref{eq:MLdotF}), multiplied by $\hbar$, has a close semblance
to what could perhaps qualify as the wave equation for the photon, viz.\footnote{Since $\bF$ has the physical dimension 
of an electric field strength, ``charge/length$^2$,'' while $|\PsiPh|^2$ has the physical dimension of a number density, ``1/length$^3$,''
 one certainly could not just make the identification ``$\bF=\PsiPh$'' --- yet, the linearity of the equations could seem to
suggest that a simple conversion factor would suffice to \emph{formally} map any electromagnetic solution
$\bF$ of \refeq{eq:MLdotF}, \refeq{eq:MLdivF} into a solution $\PsiPh$ of \refeq{eq:MLdotPSI}, \refeq{eq:MLdivPSI}
with the right physical dimensions.\vspace{-10pt}}
\begin{equation}
 i \hbar \textstyle{\pddt}{\PsiPh}(t,\sV)
= \label{eq:MLdotPSI}
        \hbar c\nab\times{\PsiPh}(t,\sV) \, ,
\end{equation}
with the vector wave function ${\PsiPh}$ constrained by
\begin{equation}
\textstyle
        \nab\cdot {\PsiPh}(t,\sV)  
= \label{eq:MLdivPSI}
        0\, .
\end{equation}
 The scare quotes are intended to prevent the reader from confusing ${\PsiPh}(t,\sV)$ with the true quantum-mechanical wave function 
of the photon $\psiPH(t,\bq_{\mbox{\tiny{ph}}})$.

 Ignoring for a moment the vital issue of Lorentz covariance, equation \refeq{eq:MLdotPSI} (constrained by \refeq{eq:MLdivPSI}) indeed
seems to offer everything one could expect from a quantum-mechanical wave equation:
\begin{quotation}
\noindent
 (a)  According to quantum mechanics, if $\PsiPh$ is indeed the photon wave function, then by ``Born's rule'' 
$|\PsiPh|^2  := \PsiPh^*\cdot\PsiPh$ should be the probability density for finding a photon at $\sV$. 
 Indeed, $|\PsiPh|^2$ satisfies the continuity equation
\begin{equation}
\pddt |\PsiPh|^2 + \nabla\cdot c\,\Im\fm\, (\PsiPh^*\times\PsiPh) = 0,
\end{equation}
so that the Hilbert space norm $\|\PsiPh\|^2:= \int_{\Rset^3}\PsiPh^*\cdot\PsiPh d^3s$ is preserved in time; this seems to support
a Born's probability interpretation of $|\PsiPh|^2$.
\smallskip

\noindent
 (b) The formal Hamiltonian given by r.h.s.\refeq{eq:MLdotPSI} has generalized eigenvalues $E = \pm \hbar c|\kV|$; so Einstein's 
relation for photons holds per the dispersion relation $\omega^2 = c^2|\kV|^2$.
\smallskip

\noindent
 (c) $\PsiPh$ transforms like a vector under space rotations, so that it could seem to describe a spin-1 particle.
    Better yet, \refeq{eq:MLdotPSI} can be rewritten in a Weyl-like format 
\begin{equation}
 i \hbar \textstyle{\pddt}{\PsiPh}(t,\sV)
= \label{eq:MLdotPSIinWEYLshape}
         c \widehat{\Sigma}\cdot (-i\hbar\nab){\PsiPh}(t,\sV) \, ,
\end{equation}
where $\widehat{\Si}$ is the vector of Oppenheimer's $3\times3$ spin matrices 
\begin{equation}
\Sigma_x = 
\begin{pmatrix}  0 &  0 & 0 \\ 
                 0 &  0 & -i\\ 
                 0 &  i & 0 
\end{pmatrix},\quad 
\Sigma_y = 
\begin{pmatrix}  0 &  0 & i\\ 
                 0 &  0 & 0 \\ 
                -i &  0 & 0 
\end{pmatrix},\quad
\Sigma_z = 
\begin{pmatrix} 0 & -i & 0 \\ 
                i &  0 & 0 \\ 
                0 &  0 & 0 
\end{pmatrix};
\end{equation}
note that $[\Sigma_x,\Sigma_y]=i\Sigma_z$ (cyclical), and that each $\Sigma_\bullet^{}$ is Hermitian with spec $(\Sigma_\bullet^{})=\{-1,0,1\}$. 
 Thus, $\widehat{S}:=\hbar\widehat{\Sigma}$ yields a ``mock photon spin operator'' with eigenvalues $\pm\hbar$ and $0$ --- with 
the constraint equation \refeq{eq:MLdivPSI} removing the $0$ mode, i.e. absence of longitudinal components, as expected for the photon 
wave function.
\end{quotation}

\noindent
 All this looks very enticing, cf. \cite{SmithRaymer}, \cite{Chandrasekar}, and it also prompted 
a (very preliminary) attempt at a de Broglie--Bohm type theory for a photon, see \cite{Esposito}.

 Of course, we so far have put aside the important question of Lorentz covariance, yet if one now recalls that $\PsiPh$ is essentially
$\bF$ (except for a constant conversion factor), and that $\bF$ is simply a complex linear combination of the Maxwell fields $\bE$ and $\bB$
which satisfy the Lorentz covariant vacuum Maxwell equations, then it could seem that one has accomplished the feat of establishing
the quantum mechanical wave equation for  a single photon. 
  However, if this were the whole story it would be in all textbooks on quantum mechanics ever written, presented jointly with Schr\"odinger's,
Pauli's,  and Dirac's wave equations for electron(s). 
 Alas, things are not that simple, as noted already by Oppenheimer \cite{OppiPHOTON} and more recently by \cite{Cugnon}.

 Namely, even if \refeq{eq:MLdotPSI}--\refeq{eq:MLdivPSI} was covariant under the full Lorentz group, as is
the real Maxwell system \refeq{eq:MLdotB}--\refeq{eq:MLdivB}, the conserved non-negative quantity 
$|\PsiPh|^2 = (\Re\fe\PsiPh)^2 + (\Im\fm\PsiPh)^2$ would transform like the time-time component of a rank 2 tensor --- indeed,
$|\PsiPh|^2 \propto |\bE|^2 + |\bB|^2$, with a constant conversion factor accounting for the proportionality, and the term at
the r.h.s. is the familiar expression of the electromagnetic field energy density. 
 As such it can't play the role of a probability density, 
which is precisely Bohm's, Oppenheimer's, Landau's, and more recently Weinberg's, critiscism quoted in the introduction.
 This has contributed to the (premature) claims that it were \emph{impossible} to construct a probability 
density, for finding a photon at a given space point, which transforms properly under Lorentz transformations.
 It's not impossible, as we have shown.
 Yet, $|\PsiPh|^2$ is certainly not the correct formula for this probability density;
as Weinberg wrote \cite{WeinbergTALK}, p.2.: ``Certainly the Maxwell field is not the wave function of the photon,...'' 

\subsection{Doubling up, plus self-duality: Just twice the same}\label{sec:selfDUAL}

 In close analogy to what we have remarked in section 2, there is a significant difference between these real and the
complex versions of Maxwell's equations. 
 With the convention that $\bE$ transforms like a polar vector and $\bB$ like an axial vector under space inversions, 
the system \refeq{eq:MLdotB}--\refeq{eq:MLdivE} is covariant under the full Lorentz group, while the system 
\refeq{eq:MLdotF}--\refeq{eq:MLdivF} is not; namely, under a space inversion,
$\bF\mapsto -\bF^\ast$, $\nab\times \mapsto -\nab\times$, and $\p_t\mapsto\p_t$, and so \refeq{eq:MLdotF} turns into 
\begin{equation}
 i \textstyle{\pddt}{\bF}^\ast(t,\sV)
= \label{eq:MLdotFinv}
       -  c\nab\times{\bF}^\ast(t,\sV) \, .
\end{equation}
 Notice that \refeq{eq:MLdotFinv} is identical with the negative of the complex conjugate of equation \refeq{eq:MLdotF}, and
therefore does not contain any new information.

 Thus, the real Maxwell system \refeq{eq:MLdotB}--\refeq{eq:MLdivB}, as field equations on Minkowski spacetime
which is covariant under the full Lorentz group, is equivalent to the pair of complex evolution equations 
\refeq{eq:MLdotF}, \refeq{eq:MLdotFinv} together with the constraint equation \refeq{eq:MLdivF} and its complex conjugate. 
 One can combine the pair of complex evolution equations into a single evolution equation,
and the pair of complex constraint equations into a single constraint equation, for a complex six component vector field
$\FV  = (\bF_+ ,\bF_-)^{\mathrm{T}}$ (cf. \cite{IBBphotonREV}); namely, we set $\nab\times\FV  := (\nab\times\bF_+ ,\nab\times\bF_-)^{\mathrm{T}}$ and 
$\nab\cdot\FV  := (\nab\cdot\bF_+ ,\nab\cdot\bF_-)^{\mathrm{T}}$, and define the block matrices
${\boldsymbol\Sigma}_1 = \begin{pmatrix} 0 & \mathds{1}_3 \\ \mathds{1}_3 & 0 \end{pmatrix}$ and 
${\boldsymbol\Sigma}_3 = \begin{pmatrix} \mathds{1}_3 & 0 \\ 0 & -\mathds{1}_3 \end{pmatrix}$ 
where $\mathds{1}_3$ and $0$ are the $3\times 3$ identity and zero matrices, then 
\begin{equation}
 i \textstyle{\pddt}{\FV}(t,\sV)
= \label{eq:MLdotFFstar}
         c {\boldsymbol\Sigma}_3 \nab\times{\FV}(t,\sV) \, 
\end{equation}
and
\begin{equation}
\textstyle
        {\boldsymbol\Sigma}_3 \nab\cdot {\FV}(t,\sV)  
= \label{eq:MLdivFFstar}
        0\, ,
\end{equation}
with $\FV$ satisfying the self-duality constraint
\beq 
\FV = {\boldsymbol\Sigma}_1 \FV^\ast.
\eeq
 This ``doubled up'' Weber form of Maxwell's equations is invariant under the full Lorentz group, yet
it does not contain any extra information --- it is strictly equivalent to Maxwell's field equations!

 Could $\FV$ (perhaps up to a conversion factor) be considered as ``wave function of the photon'', with the usual $L^2$
inner product Hilbert space?
 {As noticed by Rodi Tumulka,\footnote{Private communication, 2015.} 
since multiplication of a self-dual $\FV$ by a complex number does not preserve the self-duality, 
the space of all self-dual $\FV$ is not a complex Hilbert space, in violation of a basic quantum-mechanical principle.
 This alone should eliminate the self-dual $\FV$ from the list of contenders for the photon wave function --- 
unless for some subtle reason the photon Hilbert space should turn out not to be a complex Hilbert space!}\vspace{-5pt}

\subsection{Good's non-standard Hilbert space}\label{sec:GOODnogood}\vspace{-5pt}

  Invoking the conventional wisdom that the photon is its own anti-particle, Bia\l{}ynicki-Birula has proposed that $\FV$ 
(perhaps up to a conversion factor) is indeed to be considered as ``wave function of the photon'' \cite{BiBiTHREE,IBBphotonREV}.
 Of course, Bia\l{}ynicki-Birula is aware of the problem that 
 $|{\FV}|^2(t,\sV)$ cannot be the ``probability density of finding the photon at $\sV\in\Rset^3$, given $t\in\Rset$,'' 
for $|{\FV}|^2(t,\sV)$ does not transform like a probability density under Lorentz transformations.
 As we have seen already (since $\FV$ is equivalent to the pair $\bE\pm i\bB$),
it transforms like the time-time component of a rank-2 tensor, namely as an energy density (as was to be expected!). 

 In Bia\l{}ynicki-Birula's proposal \cite{BiBiTHREE,IBBphotonREV} (see also \cite{IBBZBBphotonREV}, \cite{RedkovETal})
$|\FV|^2$ is indentified with \emph{the density of the expected energy} ``of the photon,'' 
whereas his probability density w.r.t. Lebesgue measure for finding the photon at a point is
essentially equivalent to $|\psiLP|^2$, where 
   $\psiLP(t,\sV) \propto \int_{\Rset^3}\frac{\!\!\bF(t,\tilde{\sV})}{\;|\sV - \tilde{\sV}|^{5/2}}d^3s$
is the Landau--Peierls wave function \cite{LandauPeierls} --- a nonlocal object. 
 This would seem to suggest that $\psiLP$ is the quantum-mechanical photon wave function, except that it's not
--- Bia\l{}ynicki-Birula explains why $\psiLP$ is not acceptable as photon wave function.\footnote{The convolution of $\psiLP$ 
   with a ``polarization vector'' has been proposed as ``the photon's position wave function'' $f$ 
satisfying the ``massless square-root Klein--Gordon equation'' $i\hbar \pdt f = \hbar c \sqrt{-\Delta} f$; see \cite{Mostafazadeh}.}

 Interestingly enough, though, this unconventional identification can be fitted into a Hilbert space formulation (that apparently
goes back to \cite{Good1957}) if one is willing to accept a Hilbert space inner product not equivalent to the standard $L^2$ structure
but having a nonlocal kernel.
 More precisely, in order that (up to a factor $1/4\pi$) $\int_{\Rset^3}|\FV|^2d^3q$ is the (NB: now inevitably positive) expected energy 
``of the photon'' and \emph{not} the total probability [$=1$] for finding the photon, then in Dirac notation 
one needs to have $\int_{\Rset^3}|\FV|^2d^3q = \langle\FV|\hat{H}|\FV\rangle_{\!G}^{}$, 
with Hamiltonian $\hat{H} :=-i\hbar{\boldsymbol\Sigma}_3\widehat{\Si}\cdot\nab$, where $\widehat{\Si}$ 
and ${\boldsymbol\Sigma}_3$ were defined in Appendix A.
 But  spec $\hat{H} =\Rset$, so at best one has $\langle\FV|\hat{H}|\FV\rangle_{\!G}^{} = \pm \int_{\Rset^3}|\FV|^2d^3q$. 
 This is only possible if the inner product is given by\footnote{In \cite{Hawton} the inner product is denoted in honor 
   of Bia{\l}ynicki-Birula by $\langle\,.\,|\,.\,\rangle_{BB}^{}$;
   also, she presents a variation on the theme entirely in terms of the ``potentials'' $\AV$ for the ``fields'' $\FV$.} 
$\langle\FV_{(1)}|\FV_{(2)}\rangle_{\!G}^{} = \int_{\Rset^3} \FV_{(1)}^\dag |\hat{H}^{-1}|\FV_{(2)} d^3q$, 
where $\hat{H}^{-1}$ is the inverse of $\hat{H}$, and $|\hat{H}^{-1}|$ the absolute inverse Hamiltonian. 
 
 Even though the non-locality of this bilinear expression in the putative ``photon wave function'' $\FV$ means a 
significant departure from the usual formalism,\footnote{In textbook discussions of Dirac's equation for the electron, Born's rule is frequently 
stated as follows: ``$|\psi|^2(\qV)$ is  the probability density of finding the electron at $\qV$,'' suggesting the probability density 
is a \emph{pointwise} (local)  bilinear expression in the wave function.
 However, this glosses over the fact that $\psi$ in this statement is tacitly assumed normalized, which is a non-local operation.
 The proper statement of Born's rule, that ``the probability density of finding the electron at $\qV$ is given by 
 $|\psi|^2(\qV)/\int_{\Rset^3}|\psi|^2(\qV)\drm{q}^3$,'' makes it plain that a non-local operation on $\psi$ is involved.
 All the same, since $\int_{\Rset^3}|\psi|^2(\qV)\drm{q}^3$ is a constant of motion for Dirac's equation, at the end of the day
one only multiplies $|\psi|^2$ by a single number which is determined by the initial conditions; so the usual formalism operates with
a very mild --- and unavoidable --- form of non-locality: in the normalizing factor one integrates the product of the initial $\psi(\qV)$ 
with it's complex adjoint over all $\qV$. 
 By contrast, in the non-locality introduced by Good's inner product one multiplies $\psi(\qV)$ with the complex
adjoint of $\psi({\qV}^\prime)$, weighs this product with a weight factor which depends on the distance of $\qV$ to ${\qV}^\prime$, 
then one integrates this weighed product over all ${\qV}^\prime$, and finally over all $\qV$.\vspace{-10pt}}
but even more serious an objection is the \emph{structural instability} of this probability formula!
 Namely, the Hamiltonian depends sensitively on the ``environment'' --- as soon as the interactions of the photon 
with other particles (such as electrons) are taken into account. 
 And so the proposed formula for the probability density $|\psiLP|^2$ depends 
not just on the initial state (wave function) of the photon but also on the environment --- even at the initial instant.

 Incidentally, Bia\l{}ynicki-Birula proposes to vindicate Good's Hilbert space formalism by deriving it from the QED wave function 
formalism of an indefinite number of photons, normalized to $N=1$. 
 Recalling, as emphasized by Weinberg \cite{WeinbergTALK}, p.2, that ``second quantization'' does
not mean one quantizes a wave function --- what's being quantized is a ``classical field'' on physical spacetime ---, 
this vindication of Good's formalism from the QED wave function in momentum space representation (see Appendix D), 
normalized to $N=1$, is in effect but a reversal of the field quantization of the Maxwell equations for the electromagnetic fields in vacuum, 
sending one back to Maxwell's equations.
 Logically it does not vindicate $\FV$ as photon wave function, nor \refeq{eq:MLdotFFstar} as photon wave equation, with $\FV$ essentially
Weber's complex Maxwell field $\bE\pm i\bB$.

\subsection{The QED wave equation for an indefinite number of photons}

 All experts (e.g. \cite{OppiPHOTON,LandauLifshitzBOOKrelQM,WeinbergBOOKqft,IBBphotonREV,IBBZBBphotonREV, ScullyBOOK, Cohen-Tannoudji})
seem to agree that one can easily ``reverse-engineer'' a ``wave equation for a gas of photons''
with an \emph{indefinite} number of photons by applying a unitary transformation to the time evolution of the 
electromagnetic field operators of QED which act on ``the vacuum,'' switching from their ``Heisenberg picture'' 
to the ``Schr\"odinger picture.'' 

 For instance, the introductory textbook example is the wave function 
$\Phi(t) :=\{\Phi_{\kV,\lambda_\kV}(n;t)\,|\, \kV\in \Zset^3\!\setminus\!\{\boldsymbol{0}\},\lambda_\kV\in\{-1,1\}\}$ of  a 
``gas of non-interacting photons in a cubical box (of unit volume) with periodic boundary conditions,'' 
which is a $t$-dependent $(\kV,\lambda_\kV)$-indexed family of complex functions of the variable $n\in\{0\}\cup\Nset$,
satisfying the abstract Schr\"odinger equation
\beq \label{photonWEQUforINDEFINITELYmanyPHOTONS}
i\hbar \pdt \Phi(t) = H \Phi(t),
\eeq
where 
\beq \label{HforINDEFINITELYmanyPHOTONS}
H  = \sum_{\kV,\lambda_\kV} \hbar |\kV| ( N_{\kV,\lambda_\kV} + \textstyle\frac12) 
\eeq
is the Hamiltonian in occupation number representation, with $N_{\kV,\lambda_\kV}$ the number operator acting on
the component with wave-vector 
$\kV\in \Zset^3\!\setminus\!\{\boldsymbol{0}\}$ and polarization $\lambda_\kV\in\{-1,1\}$, 
--- i.e.,  
$\Phi_{\kV,\lambda_\kV}(n;t) =\sum_{n=0}^\infty \varphi^{n_{\kV,\lambda_\kV}}\Phi_{{\kV,\lambda_\kV}}^{n_{\kV,\lambda_\kV}}(n;t)$,
with $N_{\kV,\lambda_\kV}\Phi_{\kV,\lambda_\kV}^{n_{\kV,\lambda_\kV}}(n;t) = 
n_{\kV,\lambda_\kV}\Phi_{\kV,\lambda_\kV}^{n_{\kV,\lambda_\kV}}(n;t)$ being eigenfunctions of the number operator.

 Superficially this could be mistaken as also providing an example for a wave equation for a single photon because the different
modes don't interact. 
 However, in this approach the photon wave function for a single photon of wave vector $\kV$ and frequency $\omega = \pm c|\kV|$ is
simply the Fourier mode $e^{i(\kV\cdot\sV - \omega t)}$ (times a normalizing constant) restricted to the big box, and this
does not lead to an $L^2$ wave function when the box is expanded to  all of space. 
 Instead, a Fourier superposition is then needed, and since in this approach a ``photon'' is basically ``identified'' with a
wave vector $\kV$ and frequency $\omega = \pm c|\kV|$, it is manifest that such an $L^2(\Rset^3)$ mode inevitably consists of 
uncountably many ``photons'' --- if one insists that by ``photon'' one continues to mean an ``object with a 
wave vector $\kV$ and frequency $\omega = \pm c|\kV|$.''

 The upshot of all this is a moral: 
 It is important to keep in mind that the electromagnetic field is defined on physical spacetime $\Rset^{1,3}$, 
whether a point is occupied by a photon or not, whereas the quantum-mechanical wave function $\psiPH$ lives 
on the configuration space(-time) of a single photon's position variable, and so at each point represents some 
quantum-mechanical aspect of the photon when it is \emph{there}.
 It is simply \emph{conceptually} wrong to identify the quantum-mechanical wave function of the photon with the electromagnetic field,
or with some of its constituents, on physical space(-time).
 Likewise, the quantum-mechanical wave function of a single photon on its configuration space(-time) should also not be confused 
with the usual QED wave function, normalized to $N=1$.

\section{Spinors and bi-spinors: a primer}

\label{app:primer}
%
\subsection{Clifford algebras}\label{sec:clifford}
%
 Of central importance to the relativistic formulation of quantum mechanics in $d$ space dimensions is the 
{\em complexified spacetime algebra} $\cA$, defined as the complexification of the real Clifford algebra $\mbox{Cl}_{1,d}(\Rset)$ 
associated with the Minkowski quadratic form of signature $(+,-,\dots,-)$.  
 This is because scalar, vector, tensor and spinor-valued quantities defined on the spacetime $\Rset^{1,d}$, as well as operators 
and groups acting on them, can all be realized as $\cA$-valued sections over the spacetime, allowing for an efficient formulation 
of the theory  (see e.g. \cite{Hes2015}.) 
 In three space dimensions, $\cA = \mbox{Cl}_{1,3}(\Rset)_\Cset$ is isomorphic to $M_4(\Cset)$, the algebra of $4\times 4$ matrices 
with complex entries. 
 The isomorphism can be realized by choosing a basis for $\mbox{Cl}_{1,3}(\Rset)$ and taking complex linear  combinations of the basis elements. 
 A convenient basis for the real algebra is formed  by the so-called Dirac gamma-matrices (in their Weyl representation) and their products: 
 Let $\mathds{1}_n$ denote the $n\times n$ identity matrix, and define
\beq\label{def:gammas}
\ga^0 = \left( \begin{array}{cc}0 & \mathds{1}_2 \\ 
\mathds{1}_2 & 0 \end{array} \right),\qquad \ga^k = \left( \begin{array}{cc}0 & -\si_k \\
 \si_k & 0 \end{array} \right),\ k=1,2,3.
\eeq
where the $\si_{k\in\{1,2,3\}}$ are the three conventional Pauli matrices. 
 The $\ga$-matrices satisfy the Clifford algebra relations
\beq \label{gammasON}
\ga^\mu \ga^\nu + \ga^\nu \ga^\mu = 2 \eta^{\mu\nu}\Id_4;
\eeq
where the $\eta^{\mu\nu}$ are the components of the metric tensor 
\beq
\boldsymbol{\eta} = \diag(1,-1,-1,-1).
\eeq 

Any Clifford algebra $A$ associated with a vector space $V$ over a field $F$ contains a subspace that is isomorphic to $V$. 
 The elements of that subspace are called {\em 1-vectors}.  
Note that \refeq{gammasON} simply states that the four matrices $\{\ga^\mu\}_{\mu=0}^3$ form a Lorentz-orthonormal set of 
1-vectors in the Clifford algebra $\mbox{Cl}_{1,3}(\Rset)$.  

The following frequently-used embedding of $\Rset^{1,3}$ into its Clifford algebra $\mbox{Cl}_{1,3}(\Rset)$ can therefore be 
thought of as the expansion of a 1-vector in an ortho-normal basis:  For $x = (x^\mu)_{\mu=0}^3 \in \Rset^{1,3}$ let 
\beq\label{embed}
\ga(x) := \ga_\mu x^\mu.
\eeq 
(Indices are raised and lowered using the Minkowski metric $\boldsymbol{\eta}$ and repeated indices are summed over the range 0 to 3.)
 We will see in the next subsection that $\ga(x)$ is a rank-two bi-spinor.  

 By definition, a {\em $k$-vector} is the (Clifford) product of $k$ elements, each one of which is a 1-vector. 
 Let $\{\be_j\}_{j=1}^n$ be a basis for $V$. Every Clifford number $a\in A$ has a $k$-vector expansion of the form 
\beq a = a_S\Id + \sum_{k=1}^{n}\sum_{1\leq i_1<\dots<i_k\leq n} a^{i_1\dots i_k} \be_{i_1}\dots\be_{i_k},
\eeq
where $n=\dim_F V$ and the coefficients $a_S,a^{i_1\dots i_k}$ are in $F$.  
 It follows that the following is a basis for the (16-dimensional) algebra $\mbox{Cl}_{1,3}(\Rset)$:
\beq\label{basis}
\cB := \left\{ \mathds{1}_4; \ga^0, \ga^1, \ga^2,\ga^3; \ga^0\ga^1, \dots; \ga^0\ga^1\ga^2, \dots;\dots;\ga^0\ga^1\ga^2\ga^3 \right\},
\eeq
and therefore its complexification can be obtained by taking the coefficients of the expansion to be complex numbers: $\cA = \Span_\Cset \cB$.  

 The complexified algebra $\cA$ in particular includes the {\em pseudoscalar}
\beq 
\ga^5 := i \ga^0 \ga^1 \ga^2 \ga^3 
= \begin{pmatrix} \, \Id_2 & \ 0 \\ 0 & -\Id_2\end{pmatrix},
\eeq
and therefore the projections
\beq\label{def:projections}
\Pi_\pm := \half( \Id_4 \pm \ga^5).
\eeq
Using these projections it follows right away that $\cA$ contains all $4\times 4$ matrices, 
and it is easy to verify that $\cA$ and $M_4(\Cset)$ are indeed isomorphic as algebras, with the 
Clifford multiplication given by matrix multiplication.

 Let $a = a_S \Id +\sum_I a_I \bga^I$ denote the $k$-vector expansion of $a\in \cA$. 
 Thus $a_S,a_I\in \Cset$ and each $\bga^I$ is a $k$-fold product of gamma matrices, for some $k$. 
 Two important operations on Clifford numbers are the following:  
\begin{itemize}
\item 
The {\em scalar part} $a_S$ of an element $a \in \cA$ is by definition the coefficient of the unit element $\Id$ in the expansion of $a$ in 
any basis (such as $\cB$.)  
Using the isomorphism above, we can view $a$ as a $4\times 4$ matrix, and we then have
\beq 
a_S = \frac{1}{4}\tr a,
\eeq
where $\tr$ denote the usual operation of taking the trace of a matrix.
\item 
The {\em conjugate reversion} (a.k.a. {\em Dirac adjoint}) $\overline{a}$ of  $a\in\cA$ is by definition the element 
obtained by reversing the order of multiplication of the 1-vectors in the expansion of $a$ in terms of $k$-vectors, 
and taking the complex conjugate of the coefficients in that expansion. 
 Thus
$
\overline{a} = a_S^* \Id + \sum_I a_I^* \tilde{\bga}^I 
$
with $\tilde{\bga}^I = \ga^{i_k}\dots\ga^{i_1}$ whenever $\bga^I = \ga^{i_1}\dots\ga^{i_k}$. 
 Using the isomorphism $\cA \cong M_4(\Cset)$ it is not hard to see that, when $a$ is viewed as a $4\times 4$ matrix,
\beq\label{def:Diradj}
\overline{a} = \ga^0 a^\dag \ga^0
\eeq
where $a^\dag = (a^*)^{\mathrm{T}}$ denotes the conjugate-transpose of $a$. 
(Here and elsewhere, ${}^\ast$ denotes complex conjugation, while ${}^{\mathrm{T}}$ denotes the matrix transpose.)
\end{itemize}
\begin{rem}\label{rem:diradj}
\textit{Despite the appearance of the $\ga^0$ factors in \refeq{def:Diradj}, the Dirac adjoint operation $a\mapsto \overline{a}$ 
is frame-independent.
  Indeed, on Lorentzian backgrounds, it is the dagger operation $a \mapsto a^\dag$ that requires the choice of a frame (more precisely,
 that of a time-like direction) to be made. 
 For a quick proof of this and other important facts about Clifford numbers see \cite{Rie1946}.}
\end{rem}

\subsection{A brief recap of the groups $SL(2;\Cset)$, $SO_0(1,3;\Rset)$, $O(1,3;\Rset)$}
\label{sec:LorentzRep}
 Consider the mapping $\si :\Rset^4 \to \Cset^{2\times 2}$ given by
\beq \label{def:si}
\si(\bx) := x^0 \si_0 + x^1\si_1 + x^2\si_2 + x^3 \si_3 
\eeq
with $\si_0 := \mathds{1}_2$ (formally:
$\si(\bx) = x^\mu \si_\mu$).
  It is known that for real $\bx$, $\si(\bx)$ is a Hermitian matrix, and in fact $\{\si_\mu\}_{\mu=0}^3$ is a basis for the set of
Hermitian matrices, $H(2)$, as a vector space over $\Rset$, so that any Hermitian matrix H $\in H(2)$ can be written in the form
H $=\sigma(\bx)$ for a vector $\bx\in\Rset^4$.  

  On the other hand, if $A \in SL(2,\Cset)$ is any complex-valued $2\times 2$ matrix of determinant one, and $\bx \in \Rset^4$, then
 clearly $A\si(\bx)A^\dagger$ is Hermitian.
 It follows (see \cite{ThallerBOOK} Ch. 2 for details) that $\si$ provides a homomorphism between $SL(2,C)$ and the {\em proper} 
Lorentz group  $SO_0(1,3)\equiv\mathcal{L}^\uparrow_+$, i.e.
\beq \label{six}
A \si(\bx) A^\dagger = \si(\by) \iff \by = \bLa_A \bx,
\eeq
where $\bLa_A$ satisfies $\bLa_A^{\mathrm{T}} \boldsymbol{\eta} \bLa_A = \boldsymbol{\eta}$. 
 Thus $A \mapsto \bLa_A$ is a group homomorphism and a covering map.
 In fact,  $SL(2,\Cset)$ is the universal cover of $SO_0(1,3)$.
 This is a 2-to-1 map, because $\bLa_{\pm \Id_2 }= \Id_4$.

 Another, inequivalent, representation of the proper Lorentz group can be obtained using 
\beq \label{def:sipx}
\si'(\bx) := x^0 \si_0 - x^1\si_1 - x^2\si_2 - x^3 \si_3.
\eeq
 Setting $A^{-\dag} := {A^{-1}}^\dag \equiv {A^\dag}^{-1}$, one easily checks that 
\beq \label{siprx}
A^{-\dag} \si'(\bx) A^{-1} = \si'(\by) \iff \by = \bLa_A \bx.
\eeq

 One can use these two inequivalent representations of the proper Lorentz group $SO_0(1,3)$  to create a representation of the full 
Lorentz group $O(1,3)$ (see \cite{ThallerBOOK},  {\bf 2.5}). 
 One first doubles up the representation of the proper Lorentz group obtained above by defining 
\beq 
\bL_A := \left(\begin{array}{cc} A & 0 \\ 0 & A^{-\dagger} \end{array}\right),
\eeq
which still covers the proper Lorentz group.
  One may then complement the $\bL_A$ by the matrices 
\beq 
\bL_P = \left(\begin{array}{cc} 0 & \Id_2\\ \Id_2 & 0 \end{array}\right) = \ga^0,
\qquad\mbox{and}\qquad
 \bL_T = \left(\begin{array}{cc} 0 & -i\Id_2\\ i\Id_2 & 0 \end{array}\right),
\eeq
representing the improper Lorentz transformations \emph{space inversion} 
\beq\label{def:Pinver}
P := \left(\begin{array}{cc} 1 & 0^{\mathrm{T}} \\ 0 & -\Id_3\end{array}\right),
\eeq
(a.k.a. parity transformation) 
and \emph{time reversal} $T := -P$.
 Then $\{ \bL_A \ |\ A \in SL(2,\Cset)\}\cup\{\bL_P\}\cup\{\bL_T\}$ generates the universal covering group $\fL$ 
of the full Lorentz group.\footnote{Since the full Lorentz group is not connected, it has no unique covering group. 
Indeed, there are eight non-isomorphic covering groups of the full Lorentz group, leading to eight inequivalent representations, 
the above being only one of them.  
 However, all of these yield the same {\em projective} representation of the full group, and 
therefore are equivalent from the point of view of quantum mechanics. 
 Under certain assumptions, it is possible to reduce the number of admissible representations of the covering group to four, 
corresponding to those for which the time-reversal operator is represented by an {\em anti}-unitary map. 
See \cite{ThallerBOOK} pp.76 and 104, in particular his Thm.~3.10.}  

 N.B.: It is a simple consequence of the above definitions that $\forall\ \bL \in \fL$ we have
\beq \label{LPprop}
\bL^\dag \bL_P \bL = \bL_P.
\eeq

\subsection{Rank-one spinors}

 A rank-one spinor field is a section of a fiber bundle over the Minkowski space $\Rset^{1,3}$, with fibers isomorphic to $\Cset^2$, 
that transforms in a particular way under the action of the proper Lorentz group on the base.
  There are two types of (contra-variant) rank-one spinors: those with one undotted index (said to be of type $(\half,0)$), and those 
with one dotted index (said to be of type $(0,\half)$).
 The rank-one spinor $\zeta := (\zeta^a)_{a=0,1}$ of type $(\half,0)$ transforms as $\zeta \mapsto A\zeta$ 
under a base transformation $\bx \mapsto \bLa_A \bx$ with $\bLa_A \in SO_0(1,3)$ for some $A\in SL(2,\Cset)$.
 Similarly, $\dot\zeta := (\zeta^{\dot{a}})_{\dot{a}=0,1}$, a rank-one spinor field of type $(0,\half)$ 
transforms as $\dot\zeta \mapsto A^{-\dagger}\dot\zeta$ under the action of $\bLa_A$ as above.
 Spinor indices, whether dotted or undotted, can be moved up and down using the Levi-Civita symbols 
$(\ep^{ab}) = \left(\begin{array}{cc} 0 & -1\\ 1 & \ 0\end{array}\right)$ and $(\ep_{ab}) = -(\ep^{ab})$.
 For the remainder of this primer, when we refer to rank-one spinors we mean those of the contra-variant kind, i.e. with one index upstairs.

\subsection{Rank-two spinors}

 Rank-two spinors map rank-one spinors into rank-one spinors. 
 A rank-two spinor thus has two spinorial indices and comes in four varieties, since each of its indices could be either dotted or 
undotted. 
 These four types are distinguished by the way they transform under a proper Lorentz transform of the base: 
\beq 
\varrho^a_b 
 \mapsto  A^a_c \varrho^c_d A^{-1}{}^d_b, \quad
\upsilon^a_{\dot{b}} \mapsto  A^a_c \upsilon^c_{\dot{d}} A^\dag{}^{\dot{d}}_{\dot{b}}, \quad
\vartheta^{\dot{a}}_b \mapsto  A^{-\dag}{}^{\dot{a}}_{\dot{c}} \vartheta^{\dot{c}}_d A^{-1}{}^d_b, \quad
\varsigma^{\dot{a}}_{\dot{b}}  
\mapsto  A^{-\dag}{}^{\dot{a}}_{\dot{c}} \varsigma^{\dot{c}}_{\dot{d}} A^\dag{}^{\dot{d}}_{\dot{b}}.
\eeq
 More concisely,
\beq \label{transrules}
\varrho  \mapsto  A \varrho A^{-1}, \quad
\upsilon \mapsto  A \upsilon A^\dag{}, \quad
\vartheta \mapsto  A^{-\dag}\vartheta A^{-1}, \quad
\varsigma \mapsto  A^{-\dag}\varsigma A^\dag{}.
\eeq
 It follows from \refeq{six} and \refeq{siprx} that for $\bx\in\Rset^{1,3}$, $\si(\bx)$ and $\si'(\bx)$ are Hermitian rank-two spinors,
which transform as a rank-two spinor of the type $\upsilon^a_{\dot{b}}$, respectively 
of the type $\vartheta^{\dot{a}}_b$. 

 Now a general $2\times 2$ matrix with complex entries can always be written as the sum of a Hermitian and an anti-Hermitian matrix. 
 It thus follows that an arbitrary rank-two spinor of type $(\upsilon^a_{\dot{b}}) = \si(\bz)$ for some $\bz \in \Cset^4$, and similarly 
$(\vartheta^{\dot{a}}_b) = \si\rq{}(\bz)$.

 Just as the mappings $\bx\mapsto \si(\bx)$ and $\bx\mapsto \si'(\bx)$ provide a correspondence between \emph{real four-vectors} and 
\emph{Hermitian rank-two spinors}, there are also mappings that provide an identification of 
\emph{real anti-symmetric four-tensors of rank two} with \emph{traceless rank-two spinors} of types $\varrho^a_b$ and
$\varsigma^{\dot{a}}_{\dot{b}}$: 
  Let  $\bff = (\ff_{\mu\nu})$ be a real anti-symmetric four-tensor of rank two (a two-form on Minkowski spacetime, 
in other words) defined on the configuration spacetime $\Rset^{1,3}$.
  Let $\star$ denote the Hodge star operator. 
We define a {\em dual} ${}^{\mathrm{d}}\bff$ of $\bff$ to be the two form 
\beq 
{}^{\mathrm{d}}\bff :=  i \star\bff,\quad{\mbox{i.e.}}\quad {}^{\mathrm{d}}\ff_{\mu\nu} = \frac{i}{2} \ep_{\alpha \beta \mu\nu} \ff^{\alpha\beta},
\eeq
where the $\ep_{\alpha \beta \mu\nu}\in\{-1,0,1\}$ are the coefficients (w.r.t. a Lorentz frame; see below) 
of the totally anti-symmetric four-tensor given by the (hyper)volume form of Minkowski spacetime.
 The tensor $\bff$ can be decomposed into a {\em self-dual} and an {\em anti-self-dual part} as follows (cf. \cite{Pen1976}): Define
\beq 
{}^{\mathrm{sd}}\bff := \bff + i \star\bff,\qquad {}^{\mathrm{asd}}\bff := \bff - i \star\bff
\eeq
so that $\bff = (\sdf + \asdf)/2$.
 Since $\star\star{\bff}=-\bff$, it follows that  ${}^{\mathrm{d}}\left(\sdf\right) = \sdf$ and ${}^{\mathrm{d}}\left(\asdf\right) = - \asdf $.

 Let us now define (cf. \cite{TamburiniVicino}) the two mappings $\Si$ and $\Si'$, taking a 2-form to a rank-two spinor as follows:
\beq \label{def:Siff}
\Si(\bff) := \frac{i}{4} \ff^{\mu\nu} \si_\mu \si'_\nu,\qquad 
\Si'(\bff) := \frac{i}{4} \ff^{\mu\nu} \si'_\mu \si_\nu.
\eeq
 It is easy to see that $\Si(\bff)$ is a rank-two spinor with two undotted indices, and $\Si'(\bff)$ is a rank-two spinor with two dotted 
indices, since they conform to the transformation rules for $\varrho$ and $\varsigma$ in \refeq{transrules}.
 Furthermore, the trace of these spinors (as linear transformations of rank-one spinors) vanishes\footnote{Incidentally, 
as the astute reader may have already surmised, the trace condition \refeq{defpsiPH} on our photon wave function can be 
justified by the fact that a rank-two bi-spinor with {\em pure trace} diagonal blocks, i.e. $\psi_\pm = u_\pm \si_0$ with 
$u_\pm \in \Cset$, would have something to do with a spin-zero object. We will pick up on this thread in a future publication.}.

 One can also check that $\Si(\asdf) = 0$ and $\Si'(\sdf) = 0$, i.e. $\Si(\bff)$ only depends on the self-dual part of $\bff$ and 
$\Si'(\bff)$ depends only on the anti-self-dual part.  
It is also a consequence of the above definitions and properties that 
\beq 
\left( \Si(\bff)\right)^\dag = \Si'(\bff).
\eeq
 Finally, it is easy to 
check that given any trace-free rank-two spinor with components $\varrho^a_b$ (resp. $\varsigma^{\dot{a}}_{\dot{b}}$) there 
exists a real anti-symmetric four-tensor of rank two, $\bff$, such that $\varrho = \Si(\bff)$ (resp. $\varsigma = \Si'(\bff)$). 
 This establishes the correspondence.

 Even though it is not necessary for the above assertions, for computational purposes it is often advantageous to use a coordinate 
frame and to work with the components relative to that frame. 
 To that end we may choose a particular Lorentz-orthonormal frame $\{\bu_{(\mu)}\}_{\mu = 0}^3$ in $\Rset^{1,3}$ and express the tensor 
$\bff$ in this frame. 
 Let us define the real three-vectors $\be,\bb \in \Rset^3$ by their components,
\beq \label{defeb}
f^{0k} =: -e^k,\qquad \star{f}^{0k} =: -b^k,\quad k=1,2,3, 
\eeq
where the indices refer to the components in the chosen frame, e.g. $f_{\mu\nu} = \bff(\bu_{(\mu)},\bu_{(\nu)})$, and where
$f^{\mu\nu} = \eta^{\mu\al}\eta^{\nu\beta} f_{\al\beta}$; and let us define also the complex three-vector
\beq  
\beff := \be + i \bb, 
\eeq
and the complex four-vector
\beq 
\xiV = \begin{pmatrix} 0 \\  \beff \end{pmatrix},
\eeq
where the components at r.h.s. are w.r.t. the frame $\{\bu_{(\mu)}\}_{\mu = 0}^3$.
 Then one readily checks that
\beq 
\Si(\bff) = i\xi^\al\si_\al = i\si(\xiV),\qquad \Si'(\bff) = -i\xi^\ast{}^\al \si_\al=i\si'(\xiV^\ast).
\eeq
 Note that this correspondence, unlike the ones we have introduced in the above, is frame-dependent, and will not be preserved under a 
Lorentz transformation.

\subsection{Rank-one bi-spinors}

 As mentioned before, in order to obtain a representation of the full Lorentz group, one that includes parity transformations, 
one needs to double up the dimension of the representation and go to bi-spinors.  
A rank-one bi-spinor field $\psi$ is a section of a  bundle with $\Cset^4$ fibers, such that the top two components transform as a 
rank-one spinor of type $(\half,0)$ while the bottom two components transforms as a rank-one spinor of type $(0,\half)$:
\beq 
\psi = \left( \begin{array}{c} (\zeta_-{}^a) \\ (\zeta_+{}^{\dot{a}}) \end{array}\right).
\eeq
 Dirac's wave function of the electron is of this type. 
 Under the action of an element $\bLa$ of the full Lorentz group, a rank-one bi-spinor transforms like
\beq  
\psi \mapsto \bL \psi ,
\eeq
where $\bL \in \fL$ is the projective representation of $\bLa$.  

\subsection{Rank-two bi-spinors}
\label{sec:defPWF}
 Finally, by analogy with rank-two spinors, rank-two bi-spinors (also known as {\em Clifford numbers} \cite{Rie1946}) are operators acting on rank-one bi-spinors, and producing a rank-one 
bi-spinor. 
 These can be represented by a $4\times 4$ complex-valued matrix with four $2\times 2$ blocks, each one of which is a rank-two spinor 
of a certain type:
\beq \label{rank2bisp}
\psi = \left( \begin{array}{cc} \varrho & \upsilon \\
\,\vartheta &\, \varsigma \end{array} \right)
\equiv \left( \begin{array}{cc} (\varrho^a_b) & (\upsilon^a_{^{\dot{b}}}) \\
\,(\vartheta^{\dot{a}}_b) &\, (\varsigma^{\dot{a}}_{\dot{b}} )\end{array} \right).
\eeq
 The rules of transformation of these four rank-two spinors imply that a rank-two bi-spinor under the action of the Lorentz group would 
transform as
\beq 
\psi \mapsto \bL \psi \bL^{-1},
\eeq
where $\bL \in \fL$ is the projective representation of $\bLa$.  


\baselineskip=10pt
\bibliographystyle{plain}

\end{document}